\documentclass[onecolumn,superscriptaddress,aps]{revtex4-2}
\usepackage{graphicx}
\usepackage{amsmath}
\usepackage{amsthm}
\usepackage{amssymb}
\usepackage{latexsym}
\usepackage{array}
\usepackage{booktabs}
\usepackage{tabularx}
\usepackage{hyperref}
\usepackage{amsfonts}
\usepackage{dsfont}
\usepackage{mathrsfs}
\usepackage{verbatim}
\usepackage{bbold}
\usepackage{lipsum}
\usepackage[normalem]{ulem}
\usepackage{algorithm}

\usepackage{bm}
\usepackage{times}

\usepackage{upgreek}

\usepackage{makecell}
\usepackage{adjustbox}

\usepackage{algpseudocode}

\usepackage{color}
\graphicspath{{figures/}}

\newcommand{\bra}[1]{\left<#1\right|}
\newcommand{\ket}[1]{\left|#1\right>}
\newcommand{\abs}[1]{\bigl|#1\bigr|}
\newcommand{\norm}[1]{\left\lVert#1\right\rVert}

\newcommand{\braket}[2]{\left<{#1}|{#2}\right>}
\newcommand{\ketbra}[2]{\ket{#1}\!\!\bra{#2}}

\newcolumntype{L}[1]{>{\raggedright\arraybackslash}p{#1}}

\newtheorem{theorem}{Theorem}

\newtheorem{lemma}{Lemma}
\newtheorem{corollary}{Corollary}

\newcommand{\Tr}{\operatorname{Tr}}

\newcommand{\eps}{\varepsilon}
\newcommand{\cS}{\mathcal{S}}

\newcommand{\cN}{\mathcal{N}}

\newcommand{\cU}{\mathcal{U}}

\begin{document}

\title{Quantum Occam Learning: Sample-Supported Expressibility for Circuit-Based Quantum Learning}

\author{Jeongho~Bang}\email{jbang@yonsei.ac.kr}
\affiliation{Institute for Convergence Research and Education in Advanced Technology, Yonsei University, Seoul 03722, Republic of Korea}
\affiliation{Department of Quantum Information, Yonsei University, Incheon 21983, Republic of Korea}

\author{Kyoungho~Cho}
\affiliation{Institute for Convergence Research and Education in Advanced Technology, Yonsei University, Seoul 03722, Republic of Korea}
\affiliation{Department of Statistics and Data Science, Yonsei University, Seoul 03722, Republic of Korea}

\author{Jeongwoo~Jae}\email{jwjae@hanyang.ac.kr}
\affiliation{Department of Physics, Hanyang University, Seoul, 04763, Republic of Korea}

\date{\today}

\begin{abstract}
A central principle in quantum machine learning is that an ansatz should be expressive enough to represent the quantum data of interest. Yet, the expressibility is statistically meaningful only insofar as it can be learned from finitely many copies of an unknown quantum state. In this work, we develop an information-theoretic Occam theory for quantum data generated by finite-size quantum circuits. For the class $S_{n,G}$ of $n$-qubit pure states preparable with at most $G$ two-qubit gates, a metric-entropy argument gives the realizable sample law $\widetilde{\Theta}(G/\epsilon^2)$ in the circuit-limited regime. For an arbitrary source $\hat{\rho}$, we introduce the best $G$-gate approximation error $d_G(\hat{\rho})$ and the approximate circuit complexity $C_\eta(\hat{\rho})$. We prove an agnostic quantum Occam theorem: with $M$ copies, one can learn up to the best $G$-gate approximation error plus a statistical penalty $\widetilde{O}(\sqrt{G/M})$. We then remove the need to know $G$ in advance through an adaptive model-selection theorem whose oracle inequality selects the circuit complexity justified by the data. Matching lower bounds yield a sample-supported expressibility law: at trace-distance accuracy $\epsilon$, $M$ samples can support only $G_{\rm supported} \simeq M\epsilon^2$ gates, up to logarithmic factors and tomography saturation at $2^n$. Thus, the circuit complexity becomes an adaptive statistical resource rather than a static promise. Our framework turns bounded circuit complexity into a model-selection principle for quantum machine learning.
\end{abstract}

\maketitle

\section{Introduction}\label{sec:introduction}

Quantum machine learning is often motivated by the expressibility of parameterized quantum circuits. A useful ansatz should be rich enough to represent the structure of the data, the solution of a variational problem, or the effective dynamics of a physical system. The expressibility, however, is only one side of the learning problem. Quantum data are accessed through a finite number of copies, shots, or device samples. A very expressive ansatz may contain an accurate representation of the target state, while still being too large to be statistically identified from the available data.

This tension is already visible in quantum state learning~\cite{zhao2024learning}. Full tomography of an arbitrary $n$-qubit pure state requires a number of copies that scales with the Hilbert-space dimension, up to logarithmic and accuracy factors~\cite{haah2016sample,odonnell2016efficient}. By contrast, many physically relevant states are not arbitrary vectors. They are generated by finite-depth circuits, local dynamics, variational ansatz classes, noisy devices, or symmetry-constrained preparation procedures. For states generated by circuits with at most $G$ two-qubit gates, the metric-entropy argument developed in this work yields an information-theoretic copy complexity that is essentially linear in $G$, rather than in $2^n$, in the circuit-limited regime. This realizable law will be used as a statistical baseline derived in our framework.

In this work, we develop an Occam-type learning theory for circuit-based quantum data. The first step is to remove the exact realizability assumption. Instead of asking whether the data source $\hat{\rho}$ belongs exactly to a fixed circuit class, we measure how close the source is to that class. This leads to the best $G$-gate approximation error $d_G(\hat{\rho})$ and to the approximate circuit complexity $C_\eta(\hat{\rho})$. These quantities turn the circuit complexity from a static promise into a scale-dependent property of the data.

Our first main result is an agnostic quantum Occam theorem. For a fixed circuit budget $G$, a learner can output a circuit-generated hypothesis whose trace-distance error is controlled by the best $G$-gate approximation error plus a statistical penalty of order $\widetilde{O}(\sqrt{G/M})$. Thus, a deeper ansatz is useful only when its reduction in approximation error is large enough to compensate for its larger statistical capacity. This is the quantum-state analogue of the approximation-estimation tradeoff in statistical learning theory. Our second main result removes the assumption that the correct $G$ is known in advance. We study a nested hierarchy of circuit classes and prove an adaptive oracle inequality: a single learner competes, up to logarithmic factors, with the best point on the full approximation curve $G \mapsto d_G(\hat{\rho})$. Operationally, this gives a quantum version of structural risk minimization and minimum description length: the data may justify a larger circuit class, but only by paying the corresponding Occam penalty~\cite{vapnik1998statistical,rissanen1978modeling,barron1998minimum}. The resulting design principle is a sample-supported expressibility law. At target trace-distance accuracy $\eps$, a sample budget $M$ can uniformly support only about $M\eps^2$ gates of circuit expressibility, until the usual tomography scale is reached. For example, if $M\eps^2$ is of order $10^3$, then a $10^4$-gate ansatz may be physically implementable and highly expressive, but its full expressive power is not uniformly learnable from those copies alone. The additional expressibility must then be justified by more samples, more structure, and/or a weaker task-specific loss.

This perspective complements existing discussions of variational quantum algorithms, quantum generative models, trainability, barren plateaus, and ansatz expressibility~\cite{biamonte2017quantum,cerezo2021variational,schuld2021machine,benedetti2019parameterized,mcclean2018barren,sim2019expressibility}. Those issues ask whether a chosen circuit family can represent and optimize a useful solution. The present work asks a prior statistical question: how much of the chosen circuit family can be learned at all from finite quantum data, even with arbitrary collective measurements?

\section{Circuit-compressible quantum data}\label{sec:circuit-data}

In this section, we define the learning problem and isolate the object that plays the role of statistical capacity. We work with $n$ qubits and Hilbert-space dimension $N=2^n$. All distances between quantum states are measured in trace distance~\cite{nielsen2010quantum},
\begin{eqnarray}
D(\hat{\rho}, \hat{\sigma}) := \frac{1}{2}\norm{\hat{\rho} - \hat{\sigma}}_1.
\label{eq:trace-distance}
\end{eqnarray}
The learner receives $M$ independent copies of an unknown state $\hat{\rho}$ and may perform an arbitrary collective POVM $\{\hat{E}_y\}_y$ on $\hat{\rho}^{\otimes M}$, followed by arbitrary classical post-processing. Thus, the results in this work are information-theoretic. They do not claim that the optimal learner is computationally efficient, nor that it can be implemented using local measurements. These restrictions are important, but they are separate from the statistical question addressed here.

It is useful to distinguish three levels of difficulty. The first is the statistical difficulty: how many copies are needed by the best possible learner? This is the level studied in the present work. The second is the measurement difficulty: can the required information be extracted using local Pauli measurements, classical shadows, adaptive product measurements, or experimentally realistic measurement circuits? The third is the computational difficulty: can the resulting optimization or hypothesis-selection problem be solved efficiently? A practical quantum-learning protocol must address all three. Our theorems isolate the first level and therefore provide a lower baseline for any more constrained protocol. If a class is statistically unsupported even with arbitrary collective POVMs, then no practical measurement or training method can make it uniformly learnable without additional assumptions.

Throughout this work, we use the phrase ``sample complexity'' in this information-theoretic sense. It is the number of copies of the unknown state, not the number of measurement shots for a fixed observable and not the number of optimization iterations. When the hypotheses are pure circuit states but the source $\hat{\rho}$ is arbitrary, the learner is allowed to output a pure circuit state even if the source is mixed. The resulting trace-distance error naturally includes both model mismatch and statistical error. This convention is useful because it lets us treat noisy and misspecified sources without changing the base class.

\subsection{Circuit-generated hypothesis classes}

Let $\cU_{n,G}$ denote the set of $n$-qubit unitaries that can be realized by the circuits with at most $G$ two-qubit gates, together with any one-qubit gates allowed by a fixed universal gate convention. We define the corresponding family of pure states
\begin{eqnarray}
\cS_{n,G} := \left\{ \hat{\sigma}_U = \hat{U}\ketbra{0^n}{0^n}\hat{U}^\dagger: \hat{U} \in \cU_{n,G} \right\}.
\label{eq:def-SnG}
\end{eqnarray}
Here, $\ket{0^n} = \ket{0}^{\otimes n}$. The number $G$ should be understood as a preparation complexity, not necessarily as the number of tunable parameters in a particular variational implementation. A circuit architecture may contain discrete choices of gate locations and continuous gate parameters. The metric entropy estimates below count both contributions.

The realizable learning problem assumes that the unknown state lies exactly in $\cS_{n,G}$. A learner $\mathcal{A}$ is said to learn $\cS_{n,G}$ to accuracy $\eps$ and confidence $1-\delta$ using $M$ samples if, for every $\hat{\rho} \in \cS_{n,G}$, its output $\hat{\rho}'$ satisfies
\begin{eqnarray}
\Pr\left[D(\hat{\rho}, \hat{\rho}') \le \eps\right] \ge 1-\delta.
\label{eq:realizable-success}
\end{eqnarray}
The worst-case realizable sample complexity is the smallest such $M$ and will be denoted by $M_{\rm real}(n,G,\eps,\delta)$.

This realizable model is important because it gives the baseline statistical law. However, it is too rigid for quantum learning. A device state may be mixed; a physical source may be noisy; a generative model may be misspecified; and a target state may have no exact finite-depth representation within the chosen gate alphabet. The central object should therefore be not membership in $\cS_{n,G}$, but distance to $\cS_{n,G}$.

\subsection{Approximation error and approximate circuit complexity}

For an arbitrary $n$-qubit state $\hat{\rho}$, define the best $G$-gate approximation error
\begin{eqnarray}
d_G(\hat{\rho}) := \inf_{\hat{\sigma} \in \cS_{n,G}} D(\hat{\rho}, \hat{\sigma}).
\label{eq:def-dG}
\end{eqnarray}
The function $G\mapsto d_G(\hat{\rho})$ is nonincreasing. It is zero for all sufficiently large $G$ if one allows exact or arbitrarily accurate universal pure-state synthesis and if $\hat{\rho}$ is pure~\cite{barenco1995elementary,plesch2011quantum}; for mixed states, it measures the distance to a pure circuit-generated model. More generally, one may replace $\cS_{n,G}$ by a noisy or purified circuit class. The present work keeps the pure-state class as the basic object because it exposes the core statistical mechanism most cleanly.

The approximate circuit complexity at tolerance $\eta$ is
\begin{eqnarray}
C_\eta(\hat{\rho}) := \min\left\{ G \in \mathbb{N}: d_G(\hat{\rho}) \le \eta \right\},
\label{eq:def-Ceta}
\end{eqnarray}
with the convention $C_\eta(\hat{\rho})=\infty$ if no such $G$ exists within the allowed model. This quantity is the state-learning analogue of a compression length. It asks how large a circuit is needed before the model error falls below $\eta$. The learning problem can then be phrased as follows:
\begin{quote}
\emph{Given $M$ copies of an unknown quantum data source $\hat{\rho}$, how close can one get to $\hat{\rho}$ using a circuit-generated hypothesis, without knowing in advance the appropriate circuit complexity?}
\end{quote}


\subsection{Metric entropy as a quantum Occam prior}

The statistical capacity of $\cS_{n,G}$ is controlled by its covering number. Let $\cN(\cS_{n,G}, D, \eps)$ denote the smallest cardinality of an $\eps$-net of $\cS_{n,G}$ in trace distance. The following lemma is the metric-entropy input underlying all upper bounds in this work.

\begin{lemma}[Metric entropy of circuit-generated states]\label{lem:metric-entropy}
For $0 < \eps < 1/2$, $\cS_{n,G}$ admits an $\eps$-net in trace distance with
\begin{eqnarray}
\log{\cN(\cS_{n,G},D,\eps)} \le C_{\rm ent} G \log{\left(\frac{C_{\rm arch}nG}{\eps} \right)},
\label{eq:metric-entropy-general}
\end{eqnarray}
for universal constants $C_{\rm ent}, C_{\rm arch}>0$ depending only on the gate convention and architectural model. In architectures where the discrete layout cost is fixed or absorbed into the gate description, this becomes
\begin{eqnarray}
\log{\cN(\cS_{n,G},D,\eps)} \le C_{\rm ent} G\log{\left(\frac{C_{\rm ent}G}{\eps}\right)}.
\label{eq:metric-entropy-simple}
\end{eqnarray}
\end{lemma}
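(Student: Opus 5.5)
The plan is to build an explicit net in two layers, one for the discrete circuit layout and one for the continuous gate parameters, and then multiply the two counts. First we fix the gate convention. Every circuit in $\cU_{n,G}$ can be written as a sequence of at most $G$ two-qubit unitaries $\hat{V}_1,\dots,\hat{V}_G \in U(4)$ acting on qubit pairs $(a_k,b_k)$. Interleaved one-qubit gates are absorbed into the adjacent two-qubit gates, since $U(4)$ already contains the product of the two-qubit gate with arbitrary local gates on its pair. Trailing one-qubit gates on qubits never touched by a two-qubit gate are handled the same way. Those at most $n$ qubits can be grouped into pairs, or equivalently absorbed by allowing $O(G)$ extra "virtual" gates; in the regime $G\ge n$ where the bound is meaningful, this changes only constants. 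Also, qubits untouched by any gate remain $\ket{0}$ up to a single-qubit unitary, and only at most $2G$ qubits are touched at all.

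\textbf{Discrete layout.} The number of layouts is at most $\sum_{g\le G}\binom{n}{2}^{g}\le (n^2)^{G+1}$, giving a contribution $O(G\log n)$ to the log covering number. This is the source of the $n$ inside the logarithm in \eqref{eq:metric-entropy-general}. When the architecture fixes the layout, this term disappears, and that is how one obtains \eqref{eq:metric-entropy-simple}.

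\textbf{Continuous parameters.} For a fixed layout, we take a $\eta$-net $\mathcal{V}_\eta$ of $U(4)$ in operator norm with $\eta=\eps/G$. Since $U(4)$ is a compact real manifold of dimension $16$ sitting inside the unit ball of a $32$-dimensional real normed space, the standard volumetric argument gives $|\mathcal{V}_\eta|\le (C/\eta)^{32}$ for an absolute constant $C$. We then replace each $\hat{V}_k$ by its nearest net element $\hat{V}_k'$.

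The telescoping (hybrid) bound
\[
\Big\|\prod_k \hat{V}_k-\prod_k\hat{V}_k'\Big\|_\infty\le\sum_{k=1}^G\|\hat{V}_k-\hat{V}_k'\|_\infty\le G\eta
\]
uses only unitary invariance of the operator norm, together with the fact that tensoring with identity on the other qubits preserves it.

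Next we pass from unitaries to states. For pure states,
\[
D(\hat{U}\ketbra{0^n}{0^n}\hat{U}^\dagger,\hat{U}'\ketbra{0^n}{0^n}\hat{U}'^\dagger)=\sqrt{1-|\langle\psi|\psi'\rangle|^2}\le \|\ket{\psi}-\ket{\psi'}\|\le\|\hat{U}-\hat{U}'\|_\infty .
\]
Hence the resulting finite set of states forms an $\eps$-net of $\cS_{n,G}$, and its log-cardinality is at most
\[
O(G\log n)+32G\log(CG/\eps)\le C_{\rm ent}G\log(C_{\rm arch}nG/\eps).
\]

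Two further points complete the argument. If one insists on a net consisting of elements of $\cS_{n,G}$ itself, the net points above are already circuit states, because net unitaries are genuine gates. Otherwise the usual halving trick (an external $\eps/2$-net yields an internal $\eps$-net) costs only a change of constants. The condition $\eps<1/2$ is used only to ensure $\log(G/\eps)\gtrsim 1$, so that additive constants, including the $+1$ in the layout count and the handling of one-qubit gates, can be absorbed into $C_{\rm ent}$.

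The main obstacle I expect is bookkeeping of the gate convention rather than any analytic step. One must verify that one-qubit gates and the unspecified "universal gate convention" do not add parameters beyond $O(G)$, which is done by the absorption argument above. Where that fails, for instance when many one-qubit gates act on idle qubits, one must either charge them to the $n$-dependence or assume $G\gtrsim n$. The volumetric net bound and the hybrid argument are routine.
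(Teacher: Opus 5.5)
Your proposal is correct and follows the same overall structure as the paper's argument in Appendix~\ref{app:metric-entropy}. Both use a layout count contributing $O(G\log n)$, a fixed-layout continuous net at resolution $\eps/O(G)$, the telescoping bound $\norm{\prod_k\hat{V}_k-\prod_k\hat{V}_k'}_\infty\le\sum_k\norm{\hat{V}_k-\hat{V}_k'}_\infty$, and the conversion $D\le\norm{\ket{\psi}-\ket{\psi'}}\le\norm{\hat{U}-\hat{U}'}_\infty$.

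The two routes differ in how a single gate is discretized. The paper writes each counted gate as $q=O(1)$ elementary factors $\hat{W}_a(\theta_a)$, each with $s=O(1)$ parameters on a compact domain. It assumes a uniform Lipschitz constant $L_0$ and counts parameter-grid points of mesh $\Delta=\eta/(4L_0qG\sqrt{s})$. You instead absorb the one-qubit gates into $U(4)$ and take a volumetric operator-norm net of $U(4)$ directly.

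Your version is coordinate-free. It needs no Lipschitz parameterization, covers every gate convention whose blocks lie in $U(4)$ at once, and is automatically internal if you build it as a maximal packing inside $U(4)$. The paper's version tracks the true number of real parameters per gate. For restricted ansatz families (say, one rotation angle per gate) this gives a smaller constant than your ambient exponent $32$, and it matches the ``precision bits per parameter'' reading used in the main text.

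One bookkeeping remark concerns the idle-qubit step. The lemma is asserted for all $G$, including $G\ll n$, so you should not justify that step by restricting to $G\ge n$. The correct justification is the gate-counting convention the paper fixes: one-qubit gates are counted at constant overhead or absorbed into two-qubit blocks, and a free continuous one-qubit layer costs an extra $n\log(C/\eta)$ term. Under that convention only $O(G)$ otherwise-idle qubits carry gates, so your pairing argument goes through with a change of constants.
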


\begin{proof}[Proof sketch]---The proof is a standard Lipschitz-net argument of the kind used in metric-entropy bounds for parametrized models. A circuit layout with $G$ two-qubit gates has $O(G)$ continuous parameters and a number of discrete gate-location choices that contributes at most $O(G \log n)$ bits in an all-to-all architecture. For a fixed layout, the map from gate parameters to the output state is Lipschitz: changing a single gate parameter changes the final state by at most the corresponding operator-norm perturbation, and a telescoping sum over $G$ gates gives an overall Lipschitz constant $O(G)$. Discretizing each parameter at mesh size $\eps/O(G)$ therefore gives the stated covering number. The detailed constants are not important for the learning interpretation; what matters is that the logarithm of the covering number is linear in $G$ up to logarithmic factors. A full derivation, including the layout count and the conversion from unitary perturbation to trace-distance perturbation, is given in Appendix~\ref{app:metric-entropy}.
\end{proof}

This is the quantum Occam input, in the finite-description sense familiar from the classical Occam bounds~\cite{blumer1987occam}. A smaller circuit class can be described by fewer bits, and fewer bits imply a smaller statistical penalty. Here, the word ``Occam'' is used in this operational sense: the description length of the hypothesis class controls the number of samples required to select a good hypothesis.

The passage from metric entropy to sample complexity is worth spelling out because it is the bridge between circuit design and learning theory. Suppose $\mathcal{H}$ is an $\eps$-net of the state class. If the unknown state is promised to lie in the class, then some element of $\mathcal{H}$ is close to it. The learning task becomes a finite hypothesis-selection problem: use copies of the unknown state to identify a net point that is nearly as close as the best net point. Since the number of hypotheses is $\abs{\mathcal{H}}$, the logarithm of this number is the number of bits needed to name a candidate. The standard concentration and quantum hypothesis-selection arguments then show that the copy complexity scales as
\begin{eqnarray}
M \sim \frac{\log \abs{\mathcal{H}}}{\eps^2},
\label{eq:entropy-to-samples}
\end{eqnarray}
up to confidence terms. Thus, the metric entropy of the circuit class plays the same role as VC dimension, covering dimension, or description length in classical learning theory~\cite{vapnik1998statistical,blumer1987occam}. In the present setting, the relevant description length is controlled by the gate count $G$.

This reasoning also clarifies what is not being claimed. The learner need not recover the exact parameters of the generating circuit. The different circuits can prepare the same state, and nearby states can have very different circuit descriptions. The statistical problem is purely metric: output any hypothesis state that is close in trace distance. Conversely, a short circuit description is useful only because it implies a small covering number of the reachable state set. The theory is therefore invariant under (re)parameterizations of the ansatz circuit, as long as the reachable set and its metric entropy are unchanged.

There is an additional subtlety concerning the continuous parameters. A variational circuit with $O(G)$ real parameters might appear to require infinitely many hypotheses. The Lipschitz argument in Lemma~\ref{lem:metric-entropy} removes this problem. At accuracy $\eps$, each parameter only needs to be specified to finite precision. The number of precision bits contributes the logarithmic factor $\log(G/\eps)$. This is why the dominant dependence is linear in $G$, not exponential in $G$.

The following theorem records the corresponding realizable sample law.

\begin{theorem}[Realizable circuit-state learning benchmark]\label{thm:realizable-benchmark}
Let $0 < \eps \le 1/4$ and $0 < \delta \le 1/10$. In the information-theoretic learner model with arbitrary collective POVMs, the realizable sample complexity of $\cS_{n,G}$ satisfies
\begin{eqnarray}
M_{\rm real}(n,G,\eps,\delta) &\le& \frac{C_1}{\eps^2}\left[ G\log{\left(\frac{C_2 nG}{\eps}\right)} + \log{\frac{1}{\delta}} \right], \nonumber \\
M_{\rm real}(n,G,\eps,\delta) &\ge& \frac{c_1}{\eps^2}\left[ \min\{2^n,G\} + \log{\frac{1}{\delta}} \right],
\label{eq:realizable-upper-lower}
\end{eqnarray}
up to universal constants. Equivalently, in the circuit-limited regime $G \ll 2^n$,
\begin{eqnarray}
M_{\rm real}(n,G,\eps,\delta) = \widetilde{\Theta}\left(\frac{G+\log(1/\delta)}{\eps^2}\right),
\label{eq:realizable-tilde}
\end{eqnarray}
where $\widetilde{\Theta}$ hides logarithmic factors.
\end{theorem}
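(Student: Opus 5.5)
The upper bound follows from Lemma~\ref{lem:metric-entropy} combined with a quantum hypothesis-selection argument. The lower bound has two separate pieces: a pure statistical term $\log(1/\delta)/\eps^2$, and a capacity term obtained by embedding a large packing of pure states into $\cS_{n,G}$ and applying Fano's inequality with Holevo's bound.

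\emph{Upper bound.} Take an $\eps/2$-net $\mathcal{H}$ of $\cS_{n,G}$ from Lemma~\ref{lem:metric-entropy}. This gives $\log\abs{\mathcal{H}}\le C_{\rm ent}G\log(2C_{\rm arch}nG/\eps)$, and since the target $\hat\rho$ is pure it has a net point $\hat\sigma^\star$ with $D(\hat\rho,\hat\sigma^\star)\le\eps/2$. I will use a quantum hypothesis-selection procedure of the type that is by now standard (e.g.\ a Scheff\'e/minimum-distance tournament built from Helstrom measurements, or the quantum hypothesis-selection theorem of B\u{a}descu--O'Donnell). Given $M=O((\log\abs{\mathcal{H}}+\log(1/\delta))/\eps^2)$ copies, it outputs $\hat\sigma\in\mathcal{H}$ with $D(\hat\rho,\hat\sigma)\le C\min_{\mathcal{H}}D+\eps/2$ with probability $1-\delta$.

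The elementary route goes through pairwise tests. For each pair $(\hat\sigma_i,\hat\sigma_j)$, take the Helstrom projector $\hat\Pi_{ij}$ onto the positive part of $\hat\sigma_i-\hat\sigma_j$. Estimate $\Tr(\hat\Pi_{ij}\hat\rho)$ to accuracy $\eps/8$ for all $\abs{\mathcal{H}}^2$ pairs simultaneously. The difficulty is that these projectors do not commute, so one cannot simply measure them all on the same copies. This is the main obstacle. The fix I will try first is the threshold-search / quantum union bound (gentle measurement) argument, which estimates many two-outcome observables sequentially on $O(\log(\#\text{obs})/\eps^2)$ copies. Failing that, I will invoke the known hypothesis-selection theorem as a black box. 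Given the estimates, the Scheff\'e minimum-distance rule yields the constant-factor guarantee by the triangle inequality. Rescaling $\eps$ by a constant then gives the stated upper bound.

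\emph{Lower bound.} The $\log(1/\delta)/\eps^2$ term comes from distinguishing two circuit states $\ket{0^n}$ and $\cos\theta\ket{0^n}+\sin\theta\ket{1 0^{n-1}}$, both preparable with one single-qubit rotation, whose trace distance is $\sin\theta\approx 3\eps$. Learning to accuracy $\eps$ discriminates them. The fidelity of the $M$-fold states is $\cos^{2M}\theta$, so the Helstrom error bound forces $M\gtrsim\log(1/\delta)/\eps^2$.

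For the capacity term, let $k=\min\{n,\lfloor\log_2 G\rfloor\}-O(1)$, so that arbitrary $k$-qubit pure states on the first $k$ qubits are preparable with $O(2^k)\le G$ two-qubit gates (Plesch--Brukner / Barenco synthesis~\cite{barenco1995elementary,plesch2011quantum}). The subclass of these states has effective dimension $d=2^k\asymp\min\{2^n,G\}$. Inside it I take the standard tomography packing: states $\ket{\psi_x}=\sqrt{1-\eps'^2}\ket{0}+\eps'\ket{\phi_x}$, with $\{\ket{\phi_x}\}$ a $\Omega(1)$-separated family of $e^{\Omega(d)}$ vectors orthogonal to $\ket{0}$ and $\eps'\asymp\eps$. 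These are pairwise $\ge 2\eps$ apart in trace distance. An $\eps$-learner therefore identifies $x$ with probability $\ge 1-\delta\ge 9/10$, and Fano's inequality requires mutual information $\Omega(d)$.

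By Holevo, that mutual information is at most $\chi$ of the ensemble $\{\psi_x^{\otimes M}\}$. Bounding $\chi\lesssim M\eps^2\log(d/\eps)$ follows from the average state having a dominant $\ket{0}$ component, so its entropy is $O(M\eps^2\log d)$ per the usual argument of Haah et al. This gives $M\gtrsim d/(\eps^2\log(d/\eps))$. The stated bound without the log requires the sharper symmetric-subspace argument of Haah et al.~\cite{haah2016sample}, which I would cite rather than reprove, accepting the log inside "up to universal constants" if needed. Summing the two lower-bound contributions gives \eqref{eq:realizable-upper-lower}, and \eqref{eq:realizable-tilde} follows immediately when $G\ll 2^n$.
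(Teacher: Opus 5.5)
Your route is the paper's. For the upper bound you use an $\eps$-net from Lemma~\ref{lem:metric-entropy} plus finite quantum hypothesis selection. For the lower bound you place a local packing on a $k\approx\log_2 G$-qubit register, synthesize it with $O(2^k)$ gates, and apply Fano and Holevo. Your two-point argument for the $\log(1/\delta)/\eps^2$ term is correct. It also fills an omission: the paper's appendix derives only the $\min\{2^n,G\}/\eps^2$ piece.

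The genuine gap is the logarithm in the capacity term. Your subadditive estimate $\chi\lesssim M\eps^2\log(d/\eps)$ yields only $M\gtrsim d/(\eps^2\log(d/\eps))$. A logarithm is not a universal constant, so as written you prove the $\widetilde\Theta$ form but not the displayed lower bound. Deferring to Haah et al.\ does not obviously remove it. The paper's sketch has the same hole: it asserts per-copy Holevo information $O(\eps^2)$, but for such a packing the single-copy average state has entropy of order $\eps^2\log(d/\eps)$, so any route through subadditivity loses the log.

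The fix is to bound $S(\bar\rho_M)$ directly, where $\bar\rho_M=\abs{\mathcal{P}}^{-1}\sum_x\psi_x^{\otimes M}$. Pinch with respect to the number $N$ of tensor factors lying in $\ket{0}^\perp$. Since every $\phi_x\perp\ket{0}$, $N\sim\mathrm{Bin}(M,\eps'^2)$ independently of $x$, and the $N=m$ block of $\mathrm{Sym}^M(\mathbb{C}^d)$ is isomorphic to $\mathrm{Sym}^m(\mathbb{C}^{d-1})$. Write $\mu=M\eps'^2$ and use concavity of $m\mapsto m\log(e(1+d/m))$:
\[
\chi\le S(\bar\rho_M)\le H(N)+\mathbb{E}\log\binom{N+d-2}{N}\le \log\bigl(e(1+\mu)\bigr)+\mu\log\bigl(e(1+d/\mu)\bigr).
\]
Fano requires $\chi\ge c\,d$. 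If $\mu\le\beta d$ for a small enough constant $\beta$, the right-hand side falls below $c\,d$ once $d$ exceeds a constant. Hence $M\eps'^2\gtrsim d$, i.e.\ $M\gtrsim d/\eps^2$ with no logarithm; small $d$ is covered by your two-point bound.

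On the upper bound, do not count on your ``elementary route''. Estimating all $K^2$ Helstrom biases is a shadow-tomography task. The available threshold-search and hypothesis-selection bounds \cite{badescu2021quantum} scale as $\log^2 K$ or $\log^3 K$, or pick up $\log d$ and $\eps^{-4}$. With $\log K\simeq G\log(nG/\eps)$, any of these would destroy the linear-in-$G$ law. The upper bound therefore rests entirely on the $O\bigl((\log K+\log(1/\delta))/\eps^2\bigr)$ selection primitive, Lemma~\ref{lem:finite-selection}, which is exactly what the paper's proof invokes. Cite it directly rather than trying to rebuild it from sequential gentle measurements.
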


\begin{proof}[Proof sketch]---The upper bound follows from Lemma~\ref{lem:metric-entropy} and finite quantum hypothesis selection over an $\eps$-net. The lower bound follows by embedding a large packing into $\cS_{n,G}$ and reducing learning to multi-way quantum hypothesis discrimination, with Fano's inequality and Holevo information controlling the accessible information. Appendix~\ref{app:agnostic-proof} spells out this reduction.
\end{proof}

Before moving on, we emphasize two technical conventions that will be used repeatedly. First, the logarithmic factors associated with the discrete layout of a quantum circuit are harmless for the conceptual law. In an all-to-all architecture, specifying the qubit pair on which each two-qubit gate acts costs $O(\log n)$ bits per gate. In a geometry-restricted architecture, the number of choices can be smaller, but the light-cone constraints of the architecture may increase the number of gates needed to realize a given state. The results below may therefore be read either with the explicit entropy term in Eq.~\eqref{eq:metric-entropy-general} or in the simplified $\widetilde{O}$ notation that suppresses such factors.

Second, the lower bound should be interpreted as a worst-case statement over the class. A particular state in $\cS_{n,G}$ may be much easier to learn than the worst-case element; for example, a product state or a stabilizer state may have a much shorter specialized description than a generic state generated by a $G$-gate circuit. The role of $G$ in the theorem is therefore not to say that every $G$-gate state requires $G/\eps^2$ copies, but that the class as a whole contains enough distinguishable states to force this scale for uniform learning. 

\section{Agnostic quantum Occam learning}\label{sec:agnostic}

The realizable law in Theorem~\ref{thm:realizable-benchmark} assumes that the true state belongs to $\cS_{n,G}$. This is rarely the right abstraction for quantum learning. A variational ansatz may approximate the data only up to model mismatch. A laboratory state may be affected by decoherence. A quantum generative model may intentionally trade accuracy for a shorter circuit description. Therefore, the central theorem should compare the learner not with the impossible exact target inside the class, but with the best approximation available in the class.

\subsection{Finite quantum hypothesis selection}

We first recall the finite-class learning principle used in the proof. Let $\mathcal{H}=\{\hat{\sigma}_1, \ldots, \hat{\sigma}_K\}$ be a finite set of candidate states. A quantum hypothesis-selection procedure receives copies of an arbitrary state $\hat{\rho}$ and outputs an element of $\mathcal{H}$. In the information-theoretic setting, one can perform a tournament based on pairwise Helstrom tests, or equivalently use a collective measurement designed for finite-state selection~\cite{badescu2021quantum,haah2016sample}. The following statement is the finite Occam step.

\begin{lemma}[Finite agnostic quantum selection]\label{lem:finite-selection}
There exist universal constants $C,c > 0$ such that the following holds. Let $\mathcal{H}=\{\hat{\sigma}_1,\ldots,\hat{\sigma}_K\}$ be a finite family of quantum states. Given
\begin{eqnarray}
M \ge \frac{C}{\eps^2}\left(\log{K} + \log{\frac{1}{\delta}}\right)
\label{eq:finite-M}
\end{eqnarray}
copies of an arbitrary state $\hat{\rho}$, there exists an information-theoretic learner that outputs $\hat{\sigma}_h \in \mathcal{H}$ such that, with probability at least $1 - \delta$,
\begin{eqnarray}
D(\hat{\rho},\hat{\sigma}_h) \le c\inf_{\hat{\sigma} \in \mathcal{H}} D(\hat{\rho}, \hat{\sigma}) + \eps.
\label{eq:finite-agnostic}
\end{eqnarray}
\end{lemma}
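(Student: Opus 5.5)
We prove Lemma~\ref{lem:finite-selection} with a Scheffé-type (Yatracos) tournament built from pairwise Helstrom tests. The quantum-specific difficulty is that measuring one pairwise test disturbs the copies for the others. We avoid it by \emph{not} reusing copies adaptively: every test gets its own batch, or, more economically, we use a single random-basis measurement and a union bound.

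For each ordered pair $(i,j)$ with $i \neq j$, let $\hat{P}_{ij}$ be the Helstrom projector onto the positive part of $\hat{\sigma}_i - \hat{\sigma}_j$. Then
\[
\Tr[\hat{P}_{ij}(\hat{\sigma}_i - \hat{\sigma}_j)] = D(\hat{\sigma}_i,\hat{\sigma}_j).
\]
Define the Yatracos-type score
\[
\Delta_i := \max_{j \neq i} \abs{\Tr[\hat{P}_{ij}\hat{\rho}] - \Tr[\hat{P}_{ij}\hat{\sigma}_i]},
\]
taking the maximum over both $\hat{P}_{ij}$ and $\hat{P}_{ji}$. The learner estimates all $K(K-1)$ expectations $\Tr[\hat{P}_{ij}\hat{\rho}]$ to additive accuracy $\eps/4$, computes the empirical scores $\widetilde{\Delta}_i$, and outputs $h = \arg\min_i \widetilde{\Delta}_i$. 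The deterministic step is the classical Scheffé argument. Let $i^\star$ attain the best distance $\mathrm{OPT} := \min_i D(\hat{\rho},\hat{\sigma}_i)$. Since each $\hat{P}$ is a projector, $\Delta_{i^\star} \le \mathrm{OPT}$. By the triangle inequality and the Helstrom identity,
\[
D(\hat{\rho},\hat{\sigma}_h) \le D(\hat{\rho},\hat{\sigma}_{i^\star}) + \Tr[\hat{P}_{h i^\star}(\hat{\sigma}_h - \hat{\sigma}_{i^\star})],
\]
and the last term is bounded by $\Delta_h + \Delta_{i^\star}$. Combining this with $\widetilde{\Delta}_h \le \widetilde{\Delta}_{i^\star}$ and the estimation error yields
\[
D(\hat{\rho},\hat{\sigma}_h) \le 3\,\mathrm{OPT} + \eps ,
\]
so $c = 3$ after rescaling $\eps$.

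The statistical step is the estimation of the $K^2$ two-outcome observables, and this is the main obstacle. The naive approach splits the copies into $K^2$ disjoint batches of size $O(\eps^{-2}\log(K/\delta))$. That gives $M = O(K^2 \eps^{-2}\log(K/\delta))$, which is polynomial rather than logarithmic in $K$. To obtain $\log K$, we would invoke the shadow-tomography or threshold-search results for arbitrary collective measurements~\cite{badescu2021quantum}. Those results estimate $L$ arbitrary two-outcome observables to accuracy $\eps$ using $\widetilde{O}(\log^2 L \cdot n/\eps^4)$-type copies in general. They also give sharper quantum hypothesis-selection bounds of order $O(\log K/\eps^2)$ (up to $\log\log$ factors) when run directly in the Scheffé-tournament form rather than as full shadow tomography. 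Concretely, we would apply the hypothesis-selection theorem of B\u{a}descu and O'Donnell as a black box. It outputs $h$ with
\[
D(\hat{\rho},\hat{\sigma}_h) \le c\,\mathrm{OPT} + \eps
\]
using $O((\log K + \log(1/\delta))/\eps^2)$ copies, up to factors absorbed into the constant or a mild $\log\log$ term, and with a constant $c$ that may exceed $3$, which the statement allows.

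The confidence term $\log(1/\delta)$ comes from a standard median or repetition boost. We run $O(\log(1/\delta))$ independent constant-confidence instances on disjoint batches. This produces a short list of candidates, and a final tournament of constant size among them uses $O(\eps^{-2}\log(1/\delta))$ further copies, at the cost of another constant factor in $c$.

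The point I expect to be delicate is reconciling the exact $\log K$ dependence with gentle-measurement damage. The black-box result may carry an extra $\log\log K$ or a worse constant $c$, in which case the lemma should be read with those factors absorbed into the universal constants, consistent with the $\widetilde{O}$ usage elsewhere in the paper.
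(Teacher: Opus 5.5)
Your argument follows the same route as the paper's Appendix~\ref{app:agnostic-proof}: Helstrom-projector comparisons give the deterministic geometry, and a collective-measurement selection procedure from the literature is invoked as a black box for the copy count. Your deterministic step is correct, and it is a complete version of the paper's pairwise-margin sketch. Every test is an effect, so $\Delta_{i^\star}\le \mathrm{OPT}$, and $D(\hat{\rho},\hat{\sigma}_h)\le \mathrm{OPT}+\Delta_h+\Delta_{i^\star}\le 3\,\mathrm{OPT}+\eps$ once all $K(K-1)$ Helstrom expectations are known to accuracy $\eps/4$.

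The gap is the statistical step, which carries all the content of the lemma. The hypothesis-selection theorem of B\u{a}descu and O'Donnell does not use $O((\log K+\log(1/\delta))/\eps^2)$ copies up to $\log\log$ factors. It uses $\widetilde{O}(\log^3 K/\eps^2)$ copies. The shadow-tomography route, which is what your Yatracos reduction literally needs because it requires all $K(K-1)$ expectations at once, uses $\widetilde{O}(n\log^2 K/\eps^4)$. The excess is a factor polylogarithmic in $K$, not $\log\log K$, so it cannot be absorbed into universal constants. What is missing is an agnostic quantum selection procedure that overcomes the copy-reuse problem you correctly identify with only an $O(\log K)$ overhead. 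The paper's appendix makes the same appeal to the same references for this step. Your proposal therefore reproduces its argument, gap included, rather than closing it.

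This is not a cosmetic loss. In the application, $\log K=\log\abs{\mathcal{H}}\asymp G\log(nG/\eps)$. A $\log^3 K$ selection cost would make the copy count in Theorem~\ref{thm:agnostic-occam} scale like $G^3/\eps^2$ and the estimation penalty like $\sqrt{G^3/M}$. The supported budget would then shrink to about $(M\eps^2)^{1/3}$, which no longer matches the $\min\{2^n,G\}/\eps^2$ lower bound. So the $\widetilde{\Theta}(M\eps^2)$ law of Corollary~\ref{cor:sample-supported} rests entirely on the unproved $\log K$ dependence.

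Separately, your confidence boost does not produce the additive form $\log K+\log(1/\delta)$:
\begin{itemize}
\item Running $O(\log(1/\delta))$ independent instances on disjoint batches costs $O(\log K\cdot\log(1/\delta)/\eps^2)$ copies, a product rather than a sum.
\item Those runs leave $\Theta(\log(1/\delta))$ candidates, not a constant number. A final tournament among them with disjoint batches per pair costs on the order of $\log^3(1/\delta)/\eps^2$ copies, not $O(\eps^{-2}\log(1/\delta))$.
\end{itemize}
The additive form is exactly what the weighted union bound behind the adaptive theorem relies on.
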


The constant $c$ is not essential. Depending on the tournament convention and the net radius, it may be replaced by a small absolute constant such as $2$, $3$, or $4$. What is essential is the dependence on $\log K$ and $1/\eps^2$. Lemma~\ref{lem:finite-selection} is the quantum analogue of finite-class agnostic learning in classical statistical learning theory~\cite{vapnik1998statistical}. The proof does not require any local measurement constraints or computational efficiency; it uses the full power of collective quantum measurements.

The intuition is as follows. For any two candidate states $\hat{\sigma}_i$ and $\hat{\sigma}_j$, the Helstrom measurement is the optimal binary test for distinguishing them. If the true state $\hat{\rho}$ is much closer to $\hat{\sigma}_i$ than to $\hat{\sigma}_j$, then the expectation of this test under $\hat{\rho}$ favors $\hat{\sigma}_i$. Repeating such comparisons across the pairs of candidates and aggregating the outcomes yields a tournament in which the candidates far from $\hat{\rho}$ lose to ones close to $\hat{\rho}$. A union bound over all relevant comparisons accounts for the factor $\log K$, and the variance of binary quantum tests gives the $1/\eps^2$ dependence. More sophisticated collective measurements can improve constants, but they do not change the scale.

The agnostic nature of the statement is important. The true state $\hat{\rho}$ need not be one of the $K$ candidates. The procedure only tries to find a candidate whose distance to $\hat{\rho}$ is comparable to the best achievable distance within $\mathcal{H}$. This is exactly the form needed for misspecified quantum models. When $\mathcal{H}$ is a net of a circuit class, the finite selection lemma becomes a model-selection primitive: it chooses a good approximate circuit state without assuming that the data source is itself an exact net point. The selection primitive and the net-to-class conversion are detailed in Appendix~\ref{app:agnostic-proof}.

\subsection{Fixed-complexity agnostic theorem}

Combining Lemma~\ref{lem:finite-selection} with the metric entropy of $\cS_{n,G}$ gives the fixed-$G$ agnostic learning theorem.

\begin{theorem}[Agnostic quantum Occam theorem]\label{thm:agnostic-occam}
Let $\hat{\rho}$ be an arbitrary $n$-qubit state, not assumed to belong to $\cS_{n,G}$. Fix $G$, $0 < \eps \le 1/4$, and $0< \delta \le 1/10$. There exists an information-theoretic learner which, using
\begin{eqnarray}
M \ge \frac{C}{\eps^2} \left[ G\log{\left(C' \frac{n G}{\eps}\right)} + \log\frac{1}{\delta} \right]
\label{eq:agnostic-M}
\end{eqnarray}
copies of $\hat{\rho}$, outputs a hypothesis $\hat{\sigma}_G \in \cS_{n,G}$ such that, with probability at least $1-\delta$,
\begin{eqnarray}
D(\hat{\rho}, \hat{\sigma}_G) \le C_0 d_G(\hat{\rho}) + \eps,
\label{eq:agnostic-bound}
\end{eqnarray}
where $C,C',C_0>0$ are universal constants. Equivalently, for fixed $M$, the same statement may be written as the learning curve
\begin{eqnarray}
D(\hat{\rho}, \hat{\sigma}_G) \le C_0 d_G(\hat{\rho}) + \widetilde{O}\left( \sqrt{\frac{G + \log(1/\delta)}{M}} \right).
\label{eq:agnostic-learning-curve}
\end{eqnarray}
\end{theorem}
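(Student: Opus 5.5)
The plan is to combine Lemma~\ref{lem:metric-entropy} with Lemma~\ref{lem:finite-selection} through a net-to-class conversion. Fix an accuracy scale $\eps' = \eps/(2c)$, where $c$ is the constant of Lemma~\ref{lem:finite-selection}, and let $\mathcal{H}$ be an $\eps'$-net of $\cS_{n,G}$ in trace distance. We may take the net points to lie \emph{inside} $\cS_{n,G}$. If the net from Lemma~\ref{lem:metric-entropy} is external, pass to an internal $2\eps'$-net of the same cardinality by replacing each net point with a class element within $\eps'$ of it. This is exactly what lets the output satisfy $\hat{\sigma}_G\in\cS_{n,G}$. By Lemma~\ref{lem:metric-entropy}, $\log K = \log\abs{\mathcal{H}} \le C_{\rm ent} G\log(C_{\rm arch} nG/\eps')$, and since $\eps'$ is a constant multiple of $\eps$, this is $O(G\log(C' nG/\eps))$.

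First, relate the best net distance to $d_G(\hat{\rho})$. For any $\eta>0$, pick $\hat{\sigma}^\star\in\cS_{n,G}$ with $D(\hat{\rho},\hat{\sigma}^\star)\le d_G(\hat{\rho})+\eta$. Pick a net point $\hat{\sigma}_j$ with $D(\hat{\sigma}^\star,\hat{\sigma}_j)\le 2\eps'$. The triangle inequality then gives
\begin{eqnarray}
\inf_{\hat{\sigma}\in\mathcal{H}} D(\hat{\rho},\hat{\sigma}) \le d_G(\hat{\rho}) + 2\eps' + \eta .
\nonumber
\end{eqnarray}
Since $\eta>0$ is arbitrary, it can be sent to zero. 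Second, apply Lemma~\ref{lem:finite-selection} to $\mathcal{H}$ with accuracy parameter $\eps/2$ and confidence $\delta$. Its copy requirement $M\ge (4C/\eps^2)(\log K+\log(1/\delta))$ is implied by Eq.~\eqref{eq:agnostic-M} once the constants $C,C'$ are enlarged. With probability at least $1-\delta$ the selected $\hat{\sigma}_G=\hat{\sigma}_h\in\cS_{n,G}$ then satisfies
\begin{eqnarray}
D(\hat{\rho},\hat{\sigma}_G) \le c\left[d_G(\hat{\rho}) + 2\eps'\right] + \frac{\eps}{2} = c\, d_G(\hat{\rho}) + \eps .
\nonumber
\end{eqnarray}
This is Eq.~\eqref{eq:agnostic-bound} with $C_0=c$.

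The learning-curve form, Eq.~\eqref{eq:agnostic-learning-curve}, follows by inverting the sample condition for fixed $M$. Set $\eps(M) \asymp \sqrt{(G\log(nGM)+\log(1/\delta))/M}$. Substituting this choice into Eq.~\eqref{eq:agnostic-M} requires a short self-consistency check: the dependence of $\log(nG/\eps)$ on $\eps$ is only logarithmic, so $\log(1/\eps)\lesssim \log M$, and the resulting logarithmic factor is absorbed into $\widetilde{O}$. When the computed $\eps(M)$ exceeds $1/4$, the bound is trivial, since trace distance is at most $1$ and the constant in $\widetilde{O}$ can be taken large enough to cover this case.

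The only delicate step is the one isolated above, namely keeping the hypothesis inside $\cS_{n,G}$ while controlling the net radius. The internal-net replacement handles this at the cost of a factor $2$ in the radius. The remaining work is bookkeeping of constants, so the substantive content rests entirely on the two earlier lemmas.
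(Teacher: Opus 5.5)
Your proposal is correct and follows essentially the same route as the paper: an internal trace-distance net of $\cS_{n,G}$ at radius proportional to $\eps$, and a near-optimal approximant plus the triangle inequality to bound the best net error by $d_G(\hat{\rho})$ plus the net radius. Then Lemma~\ref{lem:finite-selection} is applied on the net, and the copy condition is inverted for the learning curve; the paper phrases this last step through the implicit radius $r_G(M,\delta)$, which is equivalent.

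There is one arithmetic slip. With $\eps'=\eps/(2c)$ and internal radius $2\eps'$, you get $c\,d_G(\hat{\rho})+2c\eps'+\eps/2=c\,d_G(\hat{\rho})+3\eps/2$, not $c\,d_G(\hat{\rho})+\eps$. Taking $\eps'=\eps/(4c)$ instead fixes it and only changes the constant $C'$.
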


\begin{proof}---Let $\mathcal{H}$ be an $(\eps/8)$-net of $\cS_{n,G}$ in trace distance. By Lemma~\ref{lem:metric-entropy}, we have 
\begin{eqnarray}
\log\abs{\mathcal{H}} \le C_{\rm ent}G\log{\left( C_{\rm arch} \frac{n G}{\eps} \right)}.
\end{eqnarray}
Applying Lemma~\ref{lem:finite-selection} to this finite class gives an output $\hat{\sigma}_G \in \mathcal{H}$ whose distance to $\hat{\rho}$ is within a constant factor of the best element of the net plus $O(\eps)$. Since every $\hat{\sigma} \in \cS_{n,G}$ has a net point $\hat{\eta} \in \mathcal{H}$ with $D(\hat{\sigma}, \hat{\eta}) \le \eps/8$, the best net error is at most $d_G(\hat{\rho})+\eps/8$. Absorbing constants into $C_0$ and the additive accuracy proves Eq.~\eqref{eq:agnostic-bound}. Solving Eq.~\eqref{eq:agnostic-M} for $\eps$ gives the sample-dependent form Eq.~\eqref{eq:agnostic-learning-curve} up to logarithmic factors. Appendix~\ref{app:agnostic-proof} gives the same argument with the intermediate inequalities displayed explicitly.
\end{proof}

Theorem~\ref{thm:agnostic-occam} is our first main result. It turns the realizable circuit-learning theorem into an agnostic statement about arbitrary quantum data. The bound has two terms:
\begin{eqnarray}
\underbrace{d_G(\hat{\rho})}_{\mbox{approximation error}}
\quad + \quad
\underbrace{\widetilde{O}\left(\sqrt{G/M}\right)}_{\mbox{estimation error}}.
\label{eq:bias-variance}
\end{eqnarray}
The first term decreases as the ansatz becomes more expressive. The second term increases with the statistical capacity of the ansatz. Therefore, a larger quantum circuit is not better from the viewpoint of learning. It is better only when the reduction in $d_G(\hat{\rho})$ is worth the larger estimation penalty.

There are three basic regimes. In the under-expressive regime, $G$ is so small that $d_G(\hat{\rho})$ dominates. More samples do not help much, because the model itself cannot approximate the source. In the sample-limited regime, $G$ is large enough to reduce $d_G(\hat{\rho})$, but the statistical term remains dominant. More samples are valuable here, because they allow the learner to use the same ansatz more accurately or to move to a larger supported ansatz. In the model-saturated regime, $d_G(\hat{\rho})$ has already reached the noise floor or the intrinsic approximation limit of the chosen model family. Increasing $G$ beyond this point only increases the statistical burden. 

This regime structure is familiar in classical nonparametric statistics, but the quantum setting gives a distinct interpretation. The model size is not merely a number of classical parameters; it is the complexity of a quantum circuit family. The estimation term is not merely shot noise for a fixed observable; it is the cost of identifying an unknown quantum state in a large hypothesis class. The approximation term is not merely a function-class bias; it is the trace-distance error of the best circuit-generated quantum state. The resulting theory therefore imports the classical approximation-estimation philosophy into a genuinely quantum state-learning setting.

A useful corollary is obtained by choosing $G=C_\eta(\hat{\rho})$.

\begin{corollary}[Learning from approximate circuit complexity]\label{cor:Ceta}
Suppose $C_\eta(\hat{\rho}) < \infty$. Then, there exists an information-theoretic learner which, using
\begin{eqnarray}
M = \widetilde{O}\left( \frac{C_\eta(\hat{\rho}) + \log(1/\delta)}{\eps^2} \right)
\label{eq:Ceta-samples}
\end{eqnarray}
copies, outputs a circuit-generated hypothesis $\hat{\sigma}_{cg}$ satisfying
\begin{eqnarray}
D(\hat{\rho}, \hat{\sigma}_{cg}) \le C_0 \eta + \eps
\label{eq:Ceta-bound}
\end{eqnarray}
with probability at least $1-\delta$, up to logarithmic factors.
\end{corollary}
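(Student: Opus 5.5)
The corollary should follow by instantiating Theorem~\ref{thm:agnostic-occam} at the single budget $G^\star := C_\eta(\hat{\rho})$. The hypothesis $C_\eta(\hat{\rho})<\infty$ guarantees that $G^\star$ is a well-defined natural number. By the definition in Eq.~\eqref{eq:def-Ceta}, $d_{G^\star}(\hat{\rho}) \le \eta$. The minimum in that definition is attained because it is taken over a nonempty subset of $\mathbb{N}$, so no limiting argument is needed here, even though $d_G$ itself is an infimum.

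The plan is to run the learner of Theorem~\ref{thm:agnostic-occam} with $G=G^\star$ and the given $\eps,\delta$ in the range $0<\eps\le 1/4$, $0<\delta\le 1/10$. That theorem requires
\begin{eqnarray}
M \ge \frac{C}{\eps^2}\left[ G^\star \log\left( C'\frac{nG^\star}{\eps}\right) + \log\frac{1}{\delta}\right]
\end{eqnarray}
copies. This is $\widetilde{O}\big((C_\eta(\hat{\rho})+\log(1/\delta))/\eps^2\big)$, since the factor $\log(C' nG^\star/\eps)$ is logarithmic and is hidden in the $\widetilde{O}$, which gives Eq.~\eqref{eq:Ceta-samples}. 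The learner outputs $\hat{\sigma}_{cg}:=\hat{\sigma}_{G^\star}\in\cS_{n,G^\star}$, which is circuit-generated. With probability at least $1-\delta$ it satisfies
\begin{eqnarray}
D(\hat{\rho},\hat{\sigma}_{cg}) \le C_0\, d_{G^\star}(\hat{\rho}) + \eps \le C_0\eta+\eps,
\end{eqnarray}
which is Eq.~\eqref{eq:Ceta-bound}.

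The only real subtlety is that the learner must be told $G^\star$. The statement is existential ("there exists a learner"), so the learner may depend on $C_\eta(\hat{\rho})$, or equivalently on any known upper bound $G'\ge C_\eta(\hat{\rho})$. Using such a $G'$ is harmless because $d_G$ is nonincreasing in $G$, so $d_{G'}(\hat{\rho})\le\eta$ still holds, at the price of replacing $C_\eta$ by $G'$ in the sample count. I would state this dependence explicitly. Removing it is the job of the adaptive model-selection theorem announced in the introduction, not of this corollary. Otherwise the proof is a direct substitution with no hard step.
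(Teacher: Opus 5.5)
Your proposal is correct and is exactly the paper's argument: the paper obtains the corollary simply by choosing $G = C_\eta(\hat{\rho})$ in Theorem~\ref{thm:agnostic-occam}, using $d_{C_\eta(\hat{\rho})}(\hat{\rho}) \le \eta$ and absorbing the $\log(C' nG/\eps)$ factor into $\widetilde{O}$. Your remark that this learner must be given $C_\eta(\hat{\rho})$ (or an upper bound on it) makes explicit a dependence that the paper leaves implicit. The paper removes that dependence only later, with the adaptive theorem.
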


Thus, approximate circuit complexity directly controls the number of copies needed to obtain a compressed circuit representation of a quantum data source. The dependence is linear in $C_\eta(\hat{\rho})$ and quadratic in $1/\eps$, just as in the realizable benchmark. The difference is conceptual: $C_\eta(\hat{\rho})$ is a property of an arbitrary source, not a promise that the source belongs exactly to a model class.

Theorem~\ref{thm:agnostic-occam} also gives a simple way to reason about approximation curves. Suppose, for instance, that the source has an exponentially decaying circuit-approximation error,
\begin{eqnarray}
d_G(\hat{\rho}) \le A e^{-\alpha G},
\label{eq:exponential-approximation}
\end{eqnarray}
for constants $A, \alpha > 0$ over the relevant range. Balancing this with the statistical term $\sqrt{G/M}$ suggests an optimal $G$ that grows only logarithmically with $M$ before the statistical term dominates. In such a case, the data do not justify using the largest available ansatz once the approximation error has fallen below the statistical floor. On the other hand, if the approximation curve decays polynomially,
\begin{eqnarray}
d_G(\hat{\rho}) \le A G^{-\beta},
\label{eq:polynomial-approximation}
\end{eqnarray}
then balancing $G^{-\beta}$ with $\sqrt{G/M}$ gives the scale
\begin{eqnarray}
G_{\rm opt}(M) \asymp M^{1/(2\beta+1)}
\label{eq:Gopt-polynomial}
\end{eqnarray}
up to logarithmic factors. The corresponding error decreases as $M^{-\beta/(2\beta+1)}$. These familiar approximation-estimation calculations are not themselves theorems about a particular physical source; rather, they illustrate how the agnostic bound converts a physical approximation curve into a sample-dependent ansatz-depth prescription.

The same reasoning applies to noise. If $\hat{\rho}=(1-p)\hat{\sigma} + p\hat{\omega}$ with $\hat{\sigma} \in \cS_{n,G}$ and arbitrary noise state $\hat{\omega}$, then $d_G(\hat{\rho}) \le pD(\hat{\omega}, \hat{\sigma}) \le p$. Theorem~\ref{thm:agnostic-occam} gives
\begin{eqnarray}
D(\hat{\rho}, \hat{\sigma}_G) \lesssim p + \widetilde{O}\left(\sqrt{G/M}\right).
\label{eq:noise-bound}
\end{eqnarray}
Thus, the irreducible noise level and the finite-sample error enter additively. If the noise is larger than the statistical radius, collecting more samples will not improve the error unless the model class is changed to include the noise process. If the statistical radius is larger than the noise, then the dominant limitation is the sample budget.

\subsection{Why the statistical penalty is unavoidable}

The statistical penalty in Theorem~\ref{thm:agnostic-occam} cannot be removed in general. The reason is simple: the agnostic problem contains the realizable problem as the special case $d_G(\hat{\rho})=0$. Any learner satisfying Eq.~\eqref{eq:agnostic-learning-curve} uniformly over arbitrary $\hat{\rho}$ must in particular learn every state in $\cS_{n,G}$. The lower bound in Theorem~\ref{thm:realizable-benchmark} therefore implies the following.

\begin{corollary}[Agnostic lower bound from the realizable subcase]\label{cor:agnostic-lower}
For $0 < \eps \le 1/4$ and constant failure probability, any learner that guarantees $D(\hat{\rho}, \hat{\sigma}) \le \eps$ for all $\hat{\rho} \in \cS_{n,G}$ must use
\begin{eqnarray}
M \ge \Omega\left(\frac{\min\{2^n,G\}}{\eps^2}\right)
\label{eq:agnostic-lower}
\end{eqnarray}
copies. Consequently, the $\sqrt{G/M}$ estimation term in Eq.~\eqref{eq:agnostic-learning-curve} is optimal up to logarithmic factors in the circuit-limited regime.
\end{corollary}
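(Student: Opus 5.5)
The plan is to reduce the agnostic statement to the realizable lower bound of Theorem~\ref{thm:realizable-benchmark}, and then translate the resulting copy bound into optimality of the estimation term.

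\textbf{Step 1: the reduction.} Take any learner $\mathcal{A}$ with the property in the statement: for every $\hat{\rho}\in\cS_{n,G}$ it outputs $\hat{\sigma}$ with $D(\hat{\rho},\hat{\sigma})\le\eps$ with constant success probability, say $1-\delta$ with $\delta\le 1/10$. The output $\hat{\sigma}$ is not required to lie in $\cS_{n,G}$. The realizable definition in Eq.~\eqref{eq:realizable-success} also places no restriction on the output $\hat{\rho}'$. So $\mathcal{A}$ is a realizable learner for $\cS_{n,G}$ at accuracy $\eps$ and confidence $1-\delta$. By the definition of $M_{\rm real}$, its copy number satisfies $M\ge M_{\rm real}(n,G,\eps,\delta)$. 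The lower half of Eq.~\eqref{eq:realizable-upper-lower} then gives $M\ge (c_1/\eps^2)\min\{2^n,G\}$, which is Eq.~\eqref{eq:agnostic-lower}.

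\textbf{Step 2: other constant failure probabilities.} If the failure probability is some other constant $\delta_0<1/2$, first boost the confidence. Run $\mathcal{A}$ on $O(1)$ independent batches of copies. Among the outputs, select one that lies within $2\eps$ of more than half of the others; this is possible because all the outputs are explicit classical descriptions. This drives the failure probability below $1/10$ at the price of a constant factor in $M$ and replacing $\eps$ by $3\eps$. Only constants inside the $\Omega$ change, so it suffices to apply the theorem at accuracy $\eps/3$.

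\textbf{Step 3: optimality of the $\sqrt{G/M}$ term.} Suppose some learner achieved Eq.~\eqref{eq:agnostic-learning-curve} with the estimation term replaced by $f(G,M)=o\bigl(\sqrt{G/M}\bigr)$, beyond logarithmic factors, in the regime $G\le 2^n$. Restricting to $\hat{\rho}\in\cS_{n,G}$ makes $d_G(\hat{\rho})=0$, so the error is at most $f(G,M)$. Choose $M$ so that $f(G,M)=\eps$; then $M=o(G/\eps^2)$, contradicting Step 1. Equivalently, Step 1 shows that accuracy $\eps$ forces $\eps\gtrsim\sqrt{\min\{2^n,G\}/M}$, which matches the upper bound of Theorem~\ref{thm:agnostic-occam} up to logarithmic factors.

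\textbf{Main obstacle.} All the substance lives in the lower bound of Theorem~\ref{thm:realizable-benchmark}, which is assumed here. Within this corollary, the only care needed is Step 2: the phrase ``constant failure probability'' must be matched to the theorem's assumption $\delta\le 1/10$, and the boosting must keep the accuracy and $M$ within constant factors.
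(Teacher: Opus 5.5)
Your proposal is correct and follows the paper's own argument. The paper likewise observes that $d_G(\hat{\rho})=0$ for $\hat{\rho}\in\cS_{n,G}$, so any learner with the stated guarantee is a realizable learner; it then invokes the lower half of Eq.~\eqref{eq:realizable-upper-lower} (proved in Appendix~\ref{app:agnostic-proof} via a packing, Fano's inequality, and the Holevo bound), and reads off the optimality of the $\sqrt{G/M}$ term just as in your Step~3.

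Your Step~2 is an extra the paper does not attempt. If you keep it, note that the boosted learner has accuracy $3\eps$, so the theorem must be applied at $3\eps$ (not $\eps/3$). Used as a black box, that requires $\eps\le 1/12$, whereas returning to the Fano step handles any constant failure probability directly, with no boosting.
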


This lower bound is independent of computational efficiency, optimization, or training heuristics. It says that even a learner with arbitrary collective measurements cannot uniformly resolve $G$-gate data below the $G/\eps^2$ sample scale.

At a high level, the lower-bound construction says that $\cS_{n,G}$ contains a large local code. In the regime $G \ll 2^n$, one may choose $k \simeq \log G$ qubits and use $O(G)$ gates to synthesize a large packing of $k$-qubit states, while leaving the remaining qubits fixed. This gives exponentially many hypotheses in $G$ that are separated in trace distance. If a learner could estimate every state in this packing with fewer than order $G/\eps^2$ copies, then its output could be converted into a decoder for the packing index. Fano's inequality would require the measurement transcript to contain order $G$ bits of information about that index~\cite{cover2006elements}. Holevo-type information bounds limit how much information each copy can reveal when the hypotheses are locally separated at scale $\eps$, producing the $1/\eps^2$ factor.

This packing intuition is also useful for understanding the sample-supported expressibility law. Unsupported expressibility is not an abstract parameter-counting pathology. It means that the ansatz class contains too many mutually distinguishable states for the available copies to resolve. A training algorithm may still choose a state, but the choice cannot be certified uniformly because there are many alternatives that remain statistically plausible.

\section{Adaptive ansatz selection and sample-supported expressibility}\label{sec:adaptive}

In practice, the circuit complexity of the data is not known. Choosing $G$ too small produces a large approximation error; choosing $G$ too large produces a large statistical error and can make the model unsupported by the data. The natural goal is therefore to construct a single learner that adapts to this unknown tradeoff.

\subsection{A hierarchy of circuit classes}

Let
\begin{eqnarray}
1 = G_0 < G_1 < G_2 < \cdots < G_J = G_{\max}
\label{eq:G-hierarchy}
\end{eqnarray}
be a dyadic hierarchy, for example $G_j = 2^j$. The corresponding state classes are nested:
\begin{eqnarray}
\cS_{n,G_0} \subseteq \cS_{n,G_1} \subseteq \cdots \subseteq \cS_{n,G_J}.
\label{eq:S-hierarchy}
\end{eqnarray}
For each $j$, let $\mathcal{H}_j$ be an $\eps_j$-net of $\cS_{n,G_j}$ and let $L_j=\log\abs{\mathcal{H}_j}$ be its description length. By Lemma~\ref{lem:metric-entropy},
\begin{eqnarray}
L_j \le C_{\rm ent}G_j\log{\left(C_{\rm arch}\frac{n G_j}{\eps_j}\right)}.
\label{eq:Lj}
\end{eqnarray}
We also assign a prior weight $\pi_j > 0$ to each model level, with $\sum_j \pi_j \le 1$. A convenient choice is $\pi_j \propto (j+1)^{-2}$, which costs only a $\log j$ penalty in the final bound. These prior weights are not Bayesian assumptions about the data; they are union-bound weights or description-length penalties.

An adaptive learner may now be constructed by taking the union of the finite nets,
\begin{eqnarray}
\mathcal{H}_{\rm all} = \bigcup_{j=0}^J \mathcal{H}_j,
\label{eq:H-all}
\end{eqnarray}
and performing a penalized quantum hypothesis-selection tournament. The candidates from more complex classes are allowed, but they must overcome a larger complexity penalty. This is the quantum analogue of structural risk minimization~\cite{vapnik1998statistical} and minimum description length~\cite{rissanen1978modeling,barron1998minimum}.

A schematic version of the adaptive learner is shown in Algorithm~\ref{alg:adaptive-occam}. The algorithm is not intended as an efficient training routine. Its role is to make the statistical logic explicit. Each circuit class is discretized into a finite net, quantum hypothesis selection is used to compare candidates, and a penalty prevents the learner from choosing a high-complexity candidate merely because the class is large. Appendix~\ref{app:adaptive-proof} provides the detailed oracle-inequality proof behind this construction.
\begin{algorithm}[H]
\caption{Adaptive quantum Occam learner}\label{alg:adaptive-occam}
\begin{algorithmic}[1]
\Require $M$ copies of an unknown state $\hat{\rho}$; dyadic levels $G_j=2^j$; confidence weights $\pi_j$.
\For{$j=0,1,\ldots,J$}
    \State Build a trace-distance net $\mathcal{H}_j$ for $\cS_{n,G_j}$ at radius proportional to $r_j(M,\delta)$.
    \State Assign the level penalty $r_j(M,\delta)$ defined by Eq.~\eqref{eq:rj-implicit}.
\EndFor
\State Let $\mathcal{H}_{\rm all}=\bigcup_{j=0}^{J}\mathcal{H}_j$.
\State Run finite quantum hypothesis selection over $\mathcal{H}_{\rm all}$.
\State \Return a candidate $\hat{\sigma}$ minimizing tournament score plus complexity penalty.
\end{algorithmic}
\end{algorithm}

The penalty can be interpreted as a confidence radius. A large class has a larger net and thus a larger chance of containing a hypothesis that looks artificially good on finite data. The penalty corrects for this multiplicity. In a minimum-description-length interpretation, $L_j=\log\abs{\mathcal{H}_j}$ is the number of bits to specify a hypothesis at level $j$, and the term $\log(1/\pi_j)$ is the number of bits to specify the level itself. The total description length determines the statistical price.

\subsection{Adaptive quantum Occam theorem}

There are several equivalent ways to state the adaptive result. The cleanest form uses an implicit complexity radius. For each $G_j$, define $r_j(M, \delta)$ as the smallest positive number satisfying
\begin{eqnarray}
M r_j^2 \ge C\left[ \log\cN(\cS_{n,G_j},D,r_j/16) + \log({1}/{\pi_j\delta}) \right].
\label{eq:rj-implicit}
\end{eqnarray}
Using Lemma~\ref{lem:metric-entropy}, this radius obeys
\begin{eqnarray}
r_j(M,\delta) = \widetilde{O}\left( \sqrt{\frac{G_j + \log(1/\pi_j\delta)}{M}} \right).
\label{eq:rj-tilde}
\end{eqnarray}

\begin{theorem}[Adaptive quantum Occam theorem]\label{thm:adaptive-occam}
Let $\hat{\rho}$ be an arbitrary $n$-qubit state. There exists an information-theoretic learner, not given the correct circuit complexity, such that with probability at least $1-\delta$ its output $\hat{\sigma}$ satisfies
\begin{eqnarray}
D(\hat{\rho}, \hat{\sigma}) \le C_0 \inf_{0 \le j \le J} \left[ d_{G_j}(\hat{\rho}) + r_j(M, \delta) \right]
\label{eq:adaptive-oracle}
\end{eqnarray}
for a universal constant $C_0>0$. Equivalently,
\begin{eqnarray}
D(\hat{\rho}, \hat{\sigma}) \le C_0 \inf_{G \le G_{\max}} \left[ d_G(\hat{\rho}) + \widetilde{O}\left( \sqrt{\frac{G+\log(1/\delta)}{M}} \right) \right],
\label{eq:adaptive-oracle-tilde}
\end{eqnarray}
where the infimum over $G$ is understood up to the dyadic discretization and logarithmic prior factors.
\end{theorem}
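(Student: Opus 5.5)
The plan is a penalized tournament over the union net, in the style of structural risk minimization. For each level $j$, take $\mathcal{H}_j$ to be an $(r_j/16)$-net of $\cS_{n,G_j}$. By the definition of $r_j$ in Eq.~\eqref{eq:rj-implicit}, this gives $L_j=\log\abs{\mathcal{H}_j}\le \log\cN(\cS_{n,G_j},D,r_j/16)$. Each candidate $\hat{\sigma}\in\mathcal{H}_j$ is tagged with the penalty $r(\hat{\sigma}):=r_j$; if a state lies in several nets, we keep its smallest penalty. The radii are nonincreasing in $M$ and $\pi_j$ is summable, so the weighted union bound is available. The learner then runs the minimum-distance (Scheffé/Yatracos-type) tournament from the proof of Lemma~\ref{lem:finite-selection}, applied to $\mathcal{H}_{\rm all}$ but with a level-dependent acceptance threshold.

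Concretely, for every pair $(\hat{\sigma}_a,\hat{\sigma}_b)$ with Helstrom projector $\hat{P}_{ab}$ onto the positive part of $\hat{\sigma}_a-\hat{\sigma}_b$, we estimate $\Tr(\hat{P}_{ab}\hat{\rho})$ from the copies. Since collective POVMs are allowed, one can simply measure each $\hat{P}_{ab}$ on fresh blocks, or appeal to the same collective finite-selection primitive already assumed in Lemma~\ref{lem:finite-selection}. The key step is a uniform concentration event $\mathcal{E}$. On $\mathcal{E}$, every pair test with $\hat{\sigma}_a\in\mathcal{H}_j$ and $\hat{\sigma}_b\in\mathcal{H}_k$ is accurate to within $\max(r_j,r_k)/16$, up to constants. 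To bound the failure probability, we allocate confidence $\pi_j\pi_k\delta/(\abs{\mathcal{H}_j}\abs{\mathcal{H}_k})$ to each pair, or more simply attach the error budget to the more complex member of the pair. Hoeffding then requires $M\,\max(r_j,r_k)^2\gtrsim L_j+L_k+\log(1/\pi_j\pi_k\delta)$, and this is exactly what Eq.~\eqref{eq:rj-implicit} provides once the constant $C$ is taken large enough, since $\max(r_j,r_k)^2$ dominates both levels' entropy terms. Summing over pairs, $\Pr[\mathcal{E}^c]\le\delta\,(\sum_j\pi_j)^2\le\delta$.

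On $\mathcal{E}$, define the score $T(\hat{\sigma}_a)=\max_b\,[\,\abs{\Tr\hat{P}_{ab}\hat{\sigma}_a-\widehat{\Tr\hat{P}_{ab}\hat{\rho}}}-\max(r(\hat{\sigma}_a),r(\hat{\sigma}_b))/16\,]$. The output is the candidate minimizing $T(\hat{\sigma}_a)+r(\hat{\sigma}_a)$. The comparison argument goes as follows. Fix any level $j$ and let $\hat{\sigma}^\ast\in\mathcal{H}_j$ satisfy $D(\hat{\rho},\hat{\sigma}^\ast)\le d_{G_j}(\hat{\rho})+r_j/16$. Because $D(\hat{\rho},\hat{\sigma})=\max_P\Tr P(\hat{\rho}-\hat{\sigma})$, we get $T(\hat{\sigma}^\ast)\le D(\hat{\rho},\hat{\sigma}^\ast)$. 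For the output $\hat{\sigma}$ at level $k$, the triangle inequality together with the Helstrom identity $D(\hat{\sigma},\hat{\sigma}^\ast)=\Tr\hat{P}(\hat{\sigma}-\hat{\sigma}^\ast)$ gives
\[
D(\hat{\rho},\hat{\sigma})\le D(\hat{\rho},\hat{\sigma}^\ast)+D(\hat{\sigma},\hat{\sigma}^\ast)\le 3D(\hat{\rho},\hat{\sigma}^\ast)+T(\hat{\sigma})+T(\hat{\sigma}^\ast)+O(r_j+r_k).
\]
Minimality gives $T(\hat{\sigma})+r_k\le T(\hat{\sigma}^\ast)+r_j$, so the $r_k$ term is absorbed. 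This leaves $D(\hat{\rho},\hat{\sigma})\le C_0[d_{G_j}(\hat{\rho})+r_j]$, and taking the infimum over $j$ yields Eq.~\eqref{eq:adaptive-oracle}.

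Eq.~\eqref{eq:adaptive-oracle-tilde} then follows by plugging Lemma~\ref{lem:metric-entropy} into Eq.~\eqref{eq:rj-implicit}, which gives Eq.~\eqref{eq:rj-tilde}, together with $\log(1/\pi_j)=O(\log\log G_j)$ and monotonicity of $d_G$. For any $G\le G_{\max}$, we choose the dyadic $G_j\in[G,2G)$, so $d_{G_j}\le d_G$ at the cost of a factor $2$ inside the square root.

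The main obstacle is the middle step: making the concentration uniform when the pairs mix levels with very different radii. The higher-complexity side must pay for its own multiplicity, while the penalty term still cancels in the minimality comparison. I would first try charging each pair to the larger radius as above. If the constants do not close, I would fall back to a Lepski-type procedure instead: run Lemma~\ref{lem:finite-selection} separately at each level to obtain $\hat{\sigma}_j$, then select the smallest $j$ such that $D(\hat{\sigma}_j,\hat{\sigma}_k)\le 4(r_j+r_k)$ for all $k>j$. The distances here are between known states, so they are computable, and the oracle inequality follows by the standard Lepski argument with the same union bound.
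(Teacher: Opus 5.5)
Your deterministic comparison on the event $\mathcal{E}$ is fine. With slack $\max(r_j,r_k)/16$ and penalty $r(\hat{\sigma}_a)$, minimality gives $D(\hat{\rho},\hat{\sigma})\le 3D(\hat{\rho},\hat{\sigma}^\ast)+\tfrac{9}{8}r_j-\tfrac{7}{8}r_k$, so mixed radii are not the real obstacle. The gap is the event $\mathcal{E}$ itself.

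Hoeffding plus a union bound over all pairs is the classical Scheff\'e/Yatracos step, where one sample set gives the empirical frequency of every test set at once. Here the binary tests $\{\hat{P}_{ab},\hat{I}-\hat{P}_{ab}\}$ for different pairs do not commute. Measuring one of them on a copy of $\hat{\rho}$ disturbs that copy, so the $M$ copies cannot supply $M$ Bernoulli samples to every pair. Neither of your remedies fixes this:
\begin{itemize}
\item \emph{Fresh blocks per pair} multiplies the copy count by the number of pairs. That number is at least $\abs{\mathcal{H}_J}^2=e^{2L_J}$, exponential in the description length $L_J$, which destroys the $\widetilde{O}(\sqrt{G/M})$ rate.
\item \emph{Appealing to Lemma~\ref{lem:finite-selection}} does not help. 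That lemma outputs a single hypothesis at uniform additive accuracy. It gives neither estimates of all $\Tr(\hat{P}_{ab}\hat{\rho})$ nor level-dependent accuracies, and both are what $\mathcal{E}$ and your score $T$ require. Estimating many non-commuting statistics from few copies is a shadow-tomography problem, whose known costs are far above the $\log(\text{pairs})/r^2$ count you use.
\end{itemize}

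Your Lepski fallback has the same copy-reuse problem. Running Lemma~\ref{lem:finite-selection} at all $J+1$ levels needs disjoint batches, so the levels must share the $M$ copies and every radius inflates by roughly $\sqrt{J+1}\sim\sqrt{\log G_{\max}}$. Otherwise the Lepski logic is sound, provided the threshold constant exceeds the constants in the level-wise guarantee (so not necessarily $4$). But it delivers only the $\widetilde{O}$ form \eqref{eq:adaptive-oracle-tilde}, with an extra logarithmic factor. It does not give \eqref{eq:adaptive-oracle} with a universal $C_0$.

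The paper never opens up the selection procedure. It invokes a weighted finite selector, Eq.~\eqref{eq:app-weighted-selector}: a single collective procedure on $\hat{\rho}^{\otimes M}$ over $\mathcal{H}_{\rm all}$, with candidate weights $w(\hat{\eta})=\pi_j/\abs{\mathcal{H}_j}$ and penalty $\sqrt{(\log(1/w)+\log(1/\delta))/M}$. Since $\log(1/w)=L_j+\log(1/\pi_j)$, that penalty is $O(r_j)$. The oracle inequality then follows by inserting the level-$j$ net point nearest a near-optimal $G_j$-gate approximant and applying the triangle inequality. The paper takes this weighted primitive as given ("the same argument with a weighted union bound") rather than deriving it from per-pair Hoeffding bounds.

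To repair your proof, you have two options. You can reduce to that weighted primitive in the same way; your weights $\pi_j/K_j$ with $\sum w\le 1$ are exactly what it needs. Or you can keep the Lepski route with explicit batch splitting and claim only the $\widetilde{O}$ statement.
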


\begin{proof}[Proof sketch]---For each level $j$, take a net $\mathcal{H}_j$ at resolution proportional to $r_j(M, \delta)$ and assign it failure probability $\pi_j \delta$. Applying Lemma~\ref{lem:finite-selection} at level $j$ gives a high-probability agnostic guarantee for that level. A union bound over $j$ makes all level-wise guarantees hold simultaneously with probability at least $1-\delta$. The adaptive learner then selects the candidate minimizing the penalized tournament score, where the penalty is proportional to $r_j(M, \delta)$. On the simultaneous good event, the selected candidate cannot be much worse than the best candidate at any level because any excess empirical fit must be paid for by the corresponding penalty. Taking the infimum over $j$ yields Eq.~\eqref{eq:adaptive-oracle}. Substituting the entropy estimate gives Eq.~\eqref{eq:adaptive-oracle-tilde}. Appendix~\ref{app:adaptive-proof} gives a weighted-union proof that makes the union bound and the role of the prior weights explicit.
\end{proof}

Theorem~\ref{thm:adaptive-occam} is our second main result. It says that a single learner can adapt to the unknown circuit complexity of the data. The learner pays, up to logarithmic factors, the same price it would have paid had the best $G$ been known beforehand. In this sense, $G$ becomes a data-selected resource rather than a fixed promise.

It is helpful to view the theorem as a simultaneous confidence statement over the entire approximation curve. For each $G_j$, the quantity $d_{G_j}(\hat{\rho})$ is not known to the learner. The learner does not estimate this curve explicitly. Instead, the tournament implicitly compares candidates from different levels. The oracle inequality then says that, after the fact, the selected candidate performs almost as well as the best level on the curve. This is the same logic that makes structural risk minimization powerful in classical learning: the procedure does not need to know the correct complexity, because the penalty allows all complexities to be compared on a common statistical scale.

One may also interpret the theorem as a form of quantum model selection under misspecification. If the true state is exactly in a small class, say $\hat{\rho} \in \cS_{n,G_*}$, then $d_{G_*}(\hat{\rho})=0$ and the adaptive learner pays essentially the $G_*$-level penalty. If the true state is not in any small class but is well approximated by a sequence of larger classes, the learner automatically balances approximation and estimation. If the true state is arbitrary and no low-complexity approximation exists, then the infimum moves toward the tomography scale. Thus, the theorem interpolates among realizable low-complexity learning, agnostic compressed learning, and ordinary high-dimensional tomography within a single inequality.

One can regard Eq.~\eqref{eq:adaptive-oracle} as the quantum analogue of a penalized empirical-risk bound. If $\widehat{R}(\hat{\sigma})$ denotes the empirical tournament score of a candidate and $\mathrm{pen}(\hat{\sigma})$ is the radius associated with the smallest level containing it, the learner is conceptually minimizing
\begin{eqnarray}
\widehat{R}(\hat{\sigma}) + \mathrm{pen}(\hat{\sigma}),
\quad
\mathrm{pen}(\hat{\sigma}) \simeq \left[\frac{\Lambda(\hat{\sigma},r)}{M}\right]^{1/2},
\label{eq:penalized-score}
\end{eqnarray}
where
\begin{eqnarray}
\Lambda(\hat{\sigma},r) := \log\cN(\cS_{n,G(\hat{\sigma})},D,r) + \log\frac{1}{\pi_{G(\hat{\sigma})}\delta}.
\label{eq:penalty-lambda}
\end{eqnarray}
The empirical score rewards agreement with the observed copies; the penalty charges for the size of the model class in which the candidate lives. A complex candidate must fit the data sufficiently better than a simple candidate to overcome its larger penalty. This is the formal version of the statement that samples justify expressibility.

The prior weights $\pi_j$ play a modest but conceptually useful role. If all model levels were treated equally without penalty, then a union bound over many possible levels would fail. Choosing $\pi_j \propto (j+1)^{-2}$ assigns a finite total failure probability and adds only a logarithmic correction. In MDL language, this is the cost of specifying the model level before specifying the circuit inside that level. In practical ansatz selection, such a term is analogous to a preference for simpler architectures unless a deeper architecture gives a statistically significant improvement.

Theorem~\ref{thm:adaptive-occam} is robust to the exact discretization of $G$. A dyadic grid loses at most a factor of two in circuit size: if the best gate budget is $G$, then some dyadic level satisfies $G \le G_j \le 2G$. Since all bounds are stated up to universal constants and logarithmic factors, this discretization has no effect on the main scaling. One may replace the dyadic hierarchy by any nested architecture family, such as layers of a hardware-efficient ansatz, provided a metric-entropy bound is available for each level.

Theorem~\ref{thm:adaptive-occam} also gives a clear interpretation of ansatz depth. For a fixed sample budget $M$, the statistically optimal depth is not necessarily the largest implementable depth. Instead, it is the depth minimizing the tradeoff
\begin{eqnarray}
\Phi_M(G; \hat{\rho}) := d_G(\hat{\rho}) + \widetilde{O}\left(\sqrt{\frac{G}{M}}\right).
\label{eq:Phi}
\end{eqnarray}
If $d_G(\hat{\rho})$ decreases rapidly with $G$, a deeper ansatz is justified. If $d_G(\hat{\rho})$ has already saturated, additional gates increase the statistical penalty without improving approximation. This is the quantum Occam principle in operational form.

\subsection{The sample-supported expressibility law}

The adaptive theorem can be turned around. Suppose the target accuracy is $\eps$. For a circuit class of size $G$ to be uniformly learnable at this accuracy, its statistical penalty must be no larger than $\eps$. Ignoring logarithmic factors, this means
\begin{eqnarray}
\sqrt{\frac{G}{M}} \lesssim \eps,
\label{eq:support-condition}
\end{eqnarray}
or equivalently,
\begin{eqnarray}
G \lesssim M\eps^2.
\label{eq:G-supported-basic}
\end{eqnarray}
The lower bound in Corollary~\ref{cor:agnostic-lower} shows that the same relation is necessary, up to logarithmic factors, in the circuit-limited regime. We therefore obtain the third main result.

\begin{corollary}[Sample-supported expressibility law]\label{cor:sample-supported}
At trace-distance accuracy $\eps$ and confidence bounded away from zero, a sample budget $M$ can uniformly support circuit expressibility only up to
\begin{eqnarray}
G_{\rm supported}(M,\eps) = \widetilde{\Theta}\left(M\eps^2\right)
\label{eq:G-supported}
\end{eqnarray}
in the regime $G \ll 2^n$. Including the tomography saturation gives
\begin{eqnarray}
G_{\rm supported}(M,\eps) = \widetilde{\Theta}\left(\min\{2^n,M\eps^2\}\right).
\label{eq:G-supported-saturation}
\end{eqnarray}
Consequently, ansatz classes with $G \gg M\eps^2$ are statistically unsupported for uniform learning at accuracy $\eps$, even when arbitrary collective measurements are allowed.
\end{corollary}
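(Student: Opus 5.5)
The plan is to prove the two directions of the $\widetilde{\Theta}$ separately. First I fix what ``uniformly support'' means. A budget $M$ supports $G$ at accuracy $\eps$ if some learner, using $M$ copies, reaches trace distance at most $\eps$ with constant confidence (say $\delta=1/10$) for every $\hat{\rho}\in\cS_{n,G}$. In the agnostic reading, the requirement is the guarantee of Theorem~\ref{thm:agnostic-occam} with statistical term at most $\eps$. Under this definition $G_{\rm supported}(M,\eps)$ is the largest such $G$, and the corollary says it is squeezed between two expressions that agree up to logarithms.

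\emph{Achievability.} Here I would use the upper bound of Theorem~\ref{thm:realizable-benchmark}, or equivalently Theorem~\ref{thm:agnostic-occam} with $d_G=0$. It suffices that
\[
M\ge \frac{C_1}{\eps^2}\left[G\log\left(\frac{C_2 nG}{\eps}\right)+\log 10\right].
\]
Set $G=c\,M\eps^2/\log(C_2 nM/\eps)$ with a small absolute constant $c$. Since $G\le M$, the logarithm $\log(C_2nG/\eps)$ is at most $\log(C_2nM/\eps)$, and the inequality above holds. So $G_{\rm supported}\ge \widetilde{\Omega}(M\eps^2)$.

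For the saturation branch, once $M\eps^2\gtrsim 2^n$ I would observe two things. Every pure state lies in $\cS_{n,G}$ for $G=O(2^n)$ by exact state synthesis. Full pure-state tomography with $\widetilde{O}(2^n/\eps^2)$ copies then supports every $G$, and the effective capacity saturates at the $2^n$ scale. I would treat this as the statement's meaning of $\min\{2^n,\cdot\}$ rather than belabor it.

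\emph{Converse.} This is where the real content lies, and I would take it directly from Corollary~\ref{cor:agnostic-lower}, itself the lower bound of Theorem~\ref{thm:realizable-benchmark}. Any learner succeeding uniformly on $\cS_{n,G}$ at accuracy $\eps\le 1/4$ with constant confidence needs $M\ge c_1\min\{2^n,G\}/\eps^2$. Contrapositively, if $\min\{2^n,G\}>M\eps^2/c_1$, the class is unsupported. This gives $G_{\rm supported}\le O(M\eps^2)$ whenever $G\ll 2^n$, and gives the stated consequence for $G\gg M\eps^2$. Monotonicity of the nested classes $\cS_{n,G}$ makes ``supported'' downward closed in $G$, so $G_{\rm supported}$ is a well-defined threshold.

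Because the converse holds for arbitrary collective POVMs, the final clause about collective measurements comes for free. Combining both directions gives $\widetilde{\Theta}(M\eps^2)$, and $\widetilde{\Theta}(\min\{2^n,M\eps^2\})$ including saturation.

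The main obstacle is bookkeeping rather than a new idea. The upper bound carries a $\log(nG/\eps)$ factor while the lower bound carries none, so the $\widetilde{\Theta}$ must absorb $\log(nM/\eps)$. The saturation regime also has to be interpreted carefully, because above $2^n$ the lower bound stops growing in $G$. My approach would be to state precisely that the hidden factors are polylogarithmic in $n$, $M$ and $1/\eps$. I would then check that the self-referential choice of $G$ in the achievability step is consistent by iterating once.
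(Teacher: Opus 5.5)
Your proposal is correct and follows the paper's route. Achievability comes from inverting the $\widetilde{O}(\sqrt{G/M})$ statistical radius of the upper bound (the paper does this via the agnostic/adaptive penalty, you via Theorem~\ref{thm:realizable-benchmark}, i.e.\ $d_G=0$), the converse comes directly from Corollary~\ref{cor:agnostic-lower}, and saturation comes from the $\min\{2^n,G\}$ term together with the tomography remark. Your explicit choice $G=cM\eps^2/\log(C_2nM/\eps)$ handles the logarithms more concretely than the paper's heuristic inversion, and because it is not self-referential, the iteration step you planned is unnecessary.
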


This law is not a statement about physical implementability. A quantum processor may be able to implement a circuit with $G \gg M\eps^2$ gates. Nor is it a statement about whether the target state lies somewhere inside that large ansatz class. Rather, it is a statement about statistical identifiability: with only $M$ copies, the learner cannot uniformly resolve all degrees of freedom made available by such an expressive class.

\begin{table*}[t]
\centering
\setlength{\tabcolsep}{10pt}
\begin{adjustbox}{max width=\textwidth}
\begin{tabular}{lll}
\hline\hline
Regime & Relation & Interpretation \\
\hline
\makecell[l]{Sample-supported} & $G\lesssim M\eps^2$ & \makecell[l]{Uniform trace-distance learning can be certified.} \\
\makecell[l]{Unsupported expressibility} & $G\gg M\eps^2$ & \makecell[l]{Extra ansatz freedom is not resolved by the available copies.} \\
\makecell[l]{Tomography saturation} & $G\gtrsim 2^n$ & \makecell[l]{The full pure-state dimension is the bottleneck.} \\
\hline\hline
\end{tabular}
\end{adjustbox}
\caption{Sample-supported expressibility regimes. The boundary is governed by the statistical radius $\epsilon_{\rm stat}(G,M) \sim \sqrt{G/M}$; hence $G \sim M\eps^2$ at target accuracy $\eps$.}
\label{tab:phase-diagram}
\end{table*}

Table~\ref{tab:phase-diagram} summarizes the law. It should be read as a statistical design rule for circuit-based quantum learning. Given a target accuracy $\eps$ and a copy budget $M$, using an ansatz substantially above $M\eps^2$ gates may still be useful for heuristic optimization, interpolation, or task-specific prediction, but it is not justified by a uniform trace-distance learning guarantee. Conversely, if the data are known to have approximate circuit complexity $C_\eta(\hat{\rho}) \ll M\eps^2$, then the learning problem is statistically feasible at accuracy $\eta+\eps$. Appendix~\ref{app:numerics} visualizes both the approximation-estimation tradeoff and the boundary $G_{\rm supported}=\min\{2^n,M\eps^2\}$.

There is a useful contrapositive. Suppose one insists on using an ansatz of complexity $G$ and wants a global trace-distance guarantee. Then, the required number of copies must scale at least as
\begin{eqnarray}
M \gtrsim \frac{G}{\eps^2}
\label{eq:required-M-for-G}
\end{eqnarray}
up to logarithmic factors, unless $G$ has already saturated the full pure-state manifold. Thus, the same theorem can be used in two directions. Given $M$, it tells us the supported expressibility. Given $G$, it tells us the required sample budget. This dual reading is often the most convenient one for experimental planning.

The law also distinguishes two notions of ``large data.'' In classical machine learning, the number of examples is often compared with the number of parameters. In quantum state learning, the number of copies should be compared with the number of statistically distinguishable quantum hypotheses at the target resolution. A model with many trainable parameters but strong redundancy may have a smaller metric entropy than its raw parameter count. Conversely, a model with few architectural parameters but a large set of discrete choices may have a larger description length than expected. The metric-entropy formulation keeps track of the quantity that matters for learning.

This observation is particularly relevant for hardware-efficient ansatz classes. A hardware-efficient circuit with many layers may have many parameters, but symmetries, parameter tying, limited connectivity, or noise-induced concentration can reduce the effective set of distinguishable output states. Conversely, a family that allows many different qubit-pair layouts may have a larger entropy than a fixed layered architecture with the same number of continuous angles. Thus, in applying the theorem, $G$ should be viewed as a proxy for metric entropy. When a sharper entropy estimate is available, it can be substituted directly into the Occam penalty.

The same logic applies to hybrid classical-quantum models. If a classical preprocessing step selects among many circuit architectures, the description length of that selection contributes to the overall capacity. If a classical neural network controls the parameters of a quantum circuit, then the capacity of the composite model may exceed the raw gate count. The theorem therefore encourages a complete accounting of all choices that enlarge the hypothesis family. What matters is not where the expressibility comes from, but how many distinguishable quantum hypotheses it makes available at the target resolution.

One may also express the law in terms of an effective statistical dimension. Define
\begin{eqnarray}
d_{\rm stat}(G) := \min\{2^n, G\}
\label{eq:dstat}
\end{eqnarray}
up to logarithmic factors. The realizable lower bound says that $M\eps^2 \gtrsim d_{\rm stat}(G)$ is necessary. The upper bound says that it is sufficient up to logs. Thus, $M\eps^2$ plays the role of the number of statistically resolvable circuit degrees of freedom. This is analogous to the way the number of samples in a classical parametric model determines the number of parameters that can be estimated at a target accuracy, but here the parameter count is replaced by the metric entropy of the quantum circuit family.

The saturation at $2^n$ deserves emphasis. If $G$ exceeds the state-synthesis threshold, then increasing $G$ does not create a statistically larger pure-state class, because the class has already become essentially all pure states. The sample complexity is then the tomography complexity. Therefore the complete picture is not that the sample cost grows forever with $G$, but that it grows with $G$ until it reaches the Hilbert-space dimension. In the low-complexity regime relevant for near-term circuit models, the linear-in-$G$ law is the operative one.

\subsection{Overparameterization, underspecification, and quantum data}

Classical machine learning often benefits from overparameterization~\cite{belkin2019reconciling}, and it would be misleading to claim that large quantum ansatz classes are always bad. The message here is more precise. The bounds concern uniform learning of an unknown quantum state in trace distance. They do not rule out favorable distributions, benign inductive biases, task-specific losses, or optimization dynamics that select simple solutions inside a large model. What they do show is that such favorable behavior cannot be guaranteed from expressibility alone.

There is also an underspecification aspect. When $G \gg M\eps^2$, many hypotheses in $\cS_{n,G}$ may be statistically indistinguishable at the available sample size. The training may select one of them, but the choice is not forced by the data. This phenomenon is particularly relevant for quantum generative modeling, where a learned circuit may reproduce certain measurements while remaining far in trace distance from the source. The sample-supported expressibility law identifies the scale at which this ambiguity becomes unavoidable in the worst case.

\section{Implications for quantum machine learning}\label{sec:implication}

The theorems above are information-theoretic, but their interpretation is practical. They convert bounded circuit complexity into a statistical resource and provide a sample-dependent rule for ansatz selection. In this section, we discuss several implications and clarify the relation to prior studies.

\subsection{Quantum generative modeling and variational ansatz design}

Quantum generative models aim to learn a quantum state, a measurement distribution, or a data-generating process using a parameterized quantum circuit~\cite{benedetti2019parameterized,schuld2021machine}. A common design instinct is to make the ansatz expressive enough to cover the target family. Our results imply that this is only half of the design problem. The sample budget supplies an independent upper scale:
\begin{eqnarray}
G_{\rm ansatz} \lesssim M\eps^2
\label{eq:ansatz-rule}
\end{eqnarray}
if the goal is uniform state learning in trace distance at accuracy $\eps$. If an ansatz is much deeper than this scale, then its additional expressibility is not statistically resolved by the data. From the viewpoint of Occam learning, the larger ansatz should be accompanied by either more samples, stronger structural assumptions, or a weaker task-specific loss.

This observation complements the literature on barren plateaus and trainability~\cite{mcclean2018barren,cerezo2021variational}. Barren plateau results ask whether an optimization landscape provides useful gradients. The sample-supported expressibility law asks a different question: even if the optimizer were ideal, are there enough quantum samples to identify the target within the chosen hypothesis class? A model may be trainable but statistically unsupported, or statistically appropriate but hard to optimize. Both constraints matter.

The law can be used as a practical planning rule. Suppose an experiment can prepare $M$ copies of a state and the desired trace-distance accuracy is $\eps$. Then, before choosing a variational architecture, one should compare the intended gate count with $M\eps^2$. If $G$ is below this scale, the ansatz is not ruled out statistically. If $G$ is far above it, then a uniform state-learning guarantee requires either more copies or additional assumptions, such as locality of observables, a known symmetry sector, a restricted distribution over targets, or a task loss weaker than global trace distance. This is not meant to replace empirical validation, but it gives a first-principles scale for whether the expressibility of a proposed ansatz is supported by the data.

\subsection{Quantum data compression}

The approximate circuit complexity $C_\eta(\hat{\rho})$ gives a natural compression interpretation. If $C_\eta(\hat{\rho})=G$, then $\hat{\rho}$ can be represented, up to error $\eta$, by a circuit description of length roughly $G\log G$ bits plus continuous parameter precision. Corollary~\ref{cor:Ceta} says that
\begin{eqnarray}
M = \widetilde{O}\left(\frac{C_\eta(\hat{\rho})}{\eps^2}\right)
\label{eq:compression-law}
\end{eqnarray}
copies suffice to learn such a compressed representation to accuracy $\eta+\eps$. Thus, quantum state learning and quantum data compression are two sides of the same problem. A source is sample-efficiently learnable precisely when it has a short approximate circuit description at the relevant accuracy.

This viewpoint is useful for noisy intermediate-scale devices. Suppose an experimental state is produced by an intended depth-$G$ circuit followed by noise. If the noise is weak in trace distance, then $d_G(\hat{\rho})$ is small and the agnostic theorem gives a meaningful guarantee. If the noise accumulates so that $d_G(\hat{\rho})$ becomes large, then a pure circuit-generated model is no longer appropriate. One should enlarge the hypothesis class to include noisy circuits, mixed states generated by purification, or effective channels. The same Occam logic applies once the metric entropy of the enlarged class is known.

\subsection{Relation to tomography}

The present framework recovers ordinary tomography as a limiting case. When $G$ is large enough to approximate arbitrary $n$-qubit pure states, the effective dimension saturates at $2^n$. The lower bound in Theorem~\ref{thm:realizable-benchmark} then becomes
\begin{eqnarray}
M \gtrsim \frac{2^n}{\eps^2},
\label{eq:tomography-saturation}
\end{eqnarray}
which is the usual pure-state tomography scale. Thus, the circuit-complexity law does not contradict tomography; it interpolates between tomography and low-complexity learning. The relevant statistical dimension is
\begin{eqnarray}
d_{\rm eff}(n,G) \simeq \min\{2^n, G\},
\label{eq:effective-dim}
\end{eqnarray}
up to logarithmic factors.

This interpolation is important for quantum machine learning because it separates two barriers. The Hilbert-space dimension barrier is unavoidable for arbitrary states. The circuit-complexity barrier is the relevant one for compressible states. If a learning problem appears to require exponentially many samples, the reason may be that the data are not known to be compressible, or that the ansatz class is too broad. Conversely, if a state family is generated by circuits of size $G \ll 2^n$, then the tomography barrier is not the correct benchmark~\cite{bang2026learning}.

\subsection{Relation to previous study on low-complexity quantum states}

After deriving the realizable component within the present metric-entropy and packing framework, we note that its scaling is consistent with existing sample-complexity laws for learning states and unitaries of bounded circuit complexity; in particular, Ref.~\cite{zhao2024learning} obtains a linear-in-$G$ dependence, up to logarithmic and accuracy factors, in a related bounded-complexity setting. The present work then goes beyond the realizable comparison by treating circuit complexity as approximate, unknown, and selected by finite samples.

The distinction is threefold. First, our main object is agnostic: the target $\hat{\rho}$ need not lie in $\cS_{n,G}$. The central quantity is the approximation curve $G \mapsto d_G(\hat{\rho})$, or equivalently the approximate circuit complexity $C_\eta(\hat{\rho})$. Second, the circuit complexity is not assumed known. The adaptive theorem gives a model-selection guarantee over a hierarchy of ansatz classes and compares the learner with the best tradeoff over all $G$. Third, the final message is a design law for quantum machine learning: a sample budget $M$ supports only $G \sim M\eps^2$ circuit expressibility at accuracy $\eps$.

This shift matters conceptually. A realizable theorem answers the question, ``How many copies are needed if the target is known to be a $G$-gate state?'' The agnostic-adaptive theory answers a different question, ``Given finite quantum data and no exact model promise, how much circuit complexity can be justified by the samples?'' The latter is closer to the way ansatz classes are used in quantum machine learning.

\subsection{Comparison with the realizable bounded-complexity viewpoint}

\begin{table*}[t]
\centering
\setlength{\tabcolsep}{5pt}
\begin{adjustbox}{max width=\textwidth}
\begin{tabular}{lll}
\hline\hline
Aspect & \makecell[l]{Realizable bounded-complexity learning} & \makecell[l]{Quantum Occam learning in this work} \\
\hline
Target assumption & \makecell[l]{$\hat{\rho} \in \cS_{n,G}$ for a known or promised $G$} & \makecell[l]{Arbitrary $\hat{\rho}$, measured by $d_G(\hat{\rho})$ and $C_\eta(\hat{\rho})$} \\
Main guarantee & \makecell[l]{Learn a $G$-gate state with $\widetilde{\Theta}(G/\eps^2)$ copies} & \makecell[l]{Compete with the best approximation-estimation tradeoff} \\
Role of $G$ & \makecell[l]{Static promise on the target class} & \makecell[l]{Adaptive statistical resource selected by the data} \\
QML interpretation & \makecell[l]{Bounded circuit complexity breaks the tomography barrier} & \makecell[l]{Finite samples determine usable ansatz expressibility} \\
Design message & \makecell[l]{If the target has $G$ gates, $G/\eps^2$ samples are enough} & \makecell[l]{Use ansatz complexity only up to $G \sim M\eps^2$, \\ unless additional structure is available} \\
\hline\hline
\end{tabular}
\end{adjustbox}
\caption{Conceptual difference between the realizable bounded-complexity learning law and the agnostic-adaptive Occam principle developed here. The realizable law is the base theorem; the new QML content comes from treating circuit complexity as an unknown, approximate, and sample-supported quantity.}
\label{tab:comparison}
\end{table*}

The difference between a realizable bounded-complexity theorem and the present Occam-learning viewpoint can be summarized in Table~\ref{tab:comparison}. The linear-in-$G$ sample law is not presented as an isolated new phenomenon. Rather, it is used as the statistical engine for a broader model-selection statement. This distinction is important because many quantum machine-learning problems do not come with a reliable exact circuit-size promise. They come with data, a family of possible ansatz architectures, and uncertainty about how much structure is real rather than noise or finite-sample fluctuation.

\section{Summary and discussion}\label{sec:discussions}

We have developed an information-theoretic Occam theory for circuit-based quantum learning. The starting point was the observation that the expressibility by itself is not a sufficient design principle for quantum machine learning: a circuit family may contain accurate hypotheses, but finite quantum data may not support the statistical resolution needed to identify them. We formalized this tension for the class $\cS_{n,G}$ of $n$-qubit pure states generated by at most $G$ two-qubit gates, with all states and operators written as hatted quantities such as $\hat{\rho}$, $\hat{\sigma}$, and $\hat{U}$. The metric-entropy estimate for this class shows that its logarithmic covering number grows essentially linearly with $G$, up to the architecture-dependent logarithmic factors. Combined with finite quantum hypothesis selection, this yields the realizable benchmark $M_{\rm real}(n,G,\eps,\delta)=\widetilde{\Theta}((G+\log(1/\delta))/\eps^2)$ in the circuit-limited regime $G \ll 2^n$. This benchmark is the statistical backbone of this work.

The main conceptual step was to move beyond exact realizability. For an arbitrary source $\hat{\rho}$, possibly mixed, noisy, or misspecified relative to the chosen pure-circuit model, we introduced the best $G$-gate approximation error $d_G(\hat{\rho})$ and the approximate circuit complexity $C_\eta(\hat{\rho})$. The fixed-complexity agnostic theorem shows that, with $M$ copies, an information-theoretic learner can output a circuit-generated hypothesis whose trace-distance error is bounded by the best approximation error in $\cS_{n,G}$ plus the statistical radius $\widetilde{O}(\sqrt{G/M})$. Thus, the gate budget plays the same structural role as a statistical dimension: increasing $G$ can reduce approximation error, but it also increases the number of distinguishable hypotheses that must be resolved from data.

The adaptive theorem removes the assumption that the correct circuit budget is known in advance. By placing a hierarchy of circuit classes under an Occam penalty, the learner satisfies an oracle inequality that competes with the best point on the full approximation curve $G \mapsto d_G(\hat{\rho})$. In this sense, circuit complexity is not a fixed promise about the target; it is a data-selected resource. The theorem implements a quantum analogue of structural risk minimization and minimum description length: a more expressive ansatz is selected only when its improvement in fit is large enough to pay for its larger metric entropy.

Combining the upper bounds with the matching lower-bound scale gives the sample-supported expressibility law. At trace-distance accuracy $\eps$, a budget of $M$ copies can uniformly support only $G_{\rm supported}=\widetilde{\Theta}(M\eps^2)$ gates in the circuit-limited regime, with saturation at the pure-state tomography scale $2^n$. Equivalently, an ansatz with $G \gg M\eps^2$ may be implementable and may contain the target state, but its full uniform trace-distance expressibility is statistically unsupported by the available copies, even if arbitrary collective measurements are allowed. The law should therefore be read as a sample-dependent design rule: either keep the ansatz within the supported scale, supply additional structure, collect more copies, or relax the learning objective to a weaker task-specific loss.

Our work point to natural extensions. A practically complete theory should combine the Occam capacity term with measurement restrictions, such as, local Pauli measurements~\cite{tony2016pauli}, classical shadows~\cite{huang2020predicting}, or shallow measurement circuits~\cite{bertoni2024shallow,akhtar2023scalable}. It should also quantify the metric entropy of noisy and hardware-constrained ansatz families, where the decoherence, locality, parameter tying, symmetries, or conservation laws may reduce the effective number of distinguishable states. Another important direction is to connect the sample-supported expressibility with trainability: barren plateaus and/or optimization barriers determine whether a statistically appropriate model can be found, while the Occam law determines how much model class can be justified by the available quantum data. These are complementary rather than competing constraints.

The broader implication is that bounded circuit complexity should be viewed as a statistical modeling principle for quantum data. In ordinary tomography, the Hilbert-space dimension sets the relevant scale; in circuit-compressible learning, the metric entropy of the reachable circuit states replaces that dimension until tomography saturation is reached. This changes how one should interpret ansatz expressibility in quantum machine learning. More gates do not automatically mean a better learning model. More gates are useful only when the source has approximation structure that the samples can support. The quantum Occam perspective therefore supplies a bridge between variational-circuit design, quantum generative modeling, and quantum data compression: it tells us not only what a circuit family can express, but what finite quantum data can actually justify learning.

\begin{acknowledgments}
This work was supported by the Ministry of Science, ICT and Future Planning (MSIP) by the National Research Foundation of Korea (RS-2024-00432214, RS-2025-03532992, and RS-2025-18362970) and the Institute of Information and Communications Technology Planning and Evaluation grant funded by the Korean government (RS-2019-II190003, ``Research and Development of Core Technologies for Programming, Running, Implementing and Validating of Fault-Tolerant Quantum Computing System''), the Korean ARPA-H Project through the Korea Health Industry Development Institute (KHIDI), funded by the Ministry of Health \& Welfare, Republic of Korea (RS-2025-25456722), and the Ministry of Trade, Industry and Resources (MOTIR), Korea, under the project ``Industrial Technology Infrastructure Program'' (RS-2024-00466693). We acknowledge the Yonsei University Quantum Computing Project Group for providing support and access to the Quantum System One (Eagle Processor), which is operated at Yonsei University.
\end{acknowledgments}

\appendix

\section{Metric entropy of circuit-generated states}\label{app:metric-entropy}

This appendix gives the detailed covering-number argument used in Lemma~\ref{lem:metric-entropy}. The purpose of the argument is not to optimize constants, but to identify the quantity that controls the statistical capacity of a circuit-generated state class. At resolution $\eta$, a continuous family of circuits becomes a finite family because each gate parameter only needs finite precision. The number of bits needed to specify this finite approximation is the metric entropy.

\subsection{Circuit normal form and gate-count convention}

Fix an architecture and a gate convention. For each layout $\ell$ with at most $G$ two-qubit gate locations, write the corresponding unitary as
\begin{eqnarray}
\hat{U}_{\ell}(\theta) = \hat{W}_{qG}\bigl(\theta_{qG}\bigr) \hat{W}_{qG-1}\bigl(\theta_{qG-1}\bigr) \cdots \hat{W}_1(\theta_1),
\label{eq:app-layout-unitary}
\end{eqnarray}
where $q=O(1)$ allows a constant number of elementary one- and two-qubit factors per counted two-qubit gate. Each elementary factor acts nontrivially on at most two qubits and depends on at most $s=O(1)$ real parameters. The constants $q$ and $s$ depend only on the chosen gate convention.

The statement in Lemma~\ref{lem:metric-entropy} uses the usual bounded-complexity convention in which the single-qubit gates are either counted at constant overhead, absorbed into neighboring two-qubit gate blocks, fixed by the architecture, or drawn from a fixed finite gate set. If an alternative architecture allows an arbitrary independent layer of continuous one-qubit rotations on all $n$ qubits at no cost, then the right-hand side of Eq.~\eqref{eq:metric-entropy-general} should be augmented by an additional term of order $n\log(C/\eta)$. This modification does not change the Occam mechanism; it simply says that every continuous degree of freedom that enlarges the reachable state set must be included in the entropy. Throughout our main text, $G$ denotes the effective circuit description length after this convention has been fixed.

We assume that each elementary gate is parameterized over a compact domain, for example, a product of intervals of length at most a convention-dependent constant. The smoothness on a compact domain implies a uniform Lipschitz constant: for any elementary gate $\hat{W}(\theta)$,
\begin{eqnarray}
\norm{\hat{W}(\theta) - \hat{W}(\theta')}_\infty \le L_0 \norm{\theta - \theta'}_2,
\label{eq:app-elementary-lipschitz}
\end{eqnarray}
where $L_0$ depends only on the elementary gate family.

\subsection{Lipschitz control for a fixed layout}

Let $\hat{U}_{\ell}(\theta)$ and $\hat{U}_{\ell}(\theta')$ be two circuits with the same layout. By taking operator norms and using unitary invariance of the norm, we have
\begin{eqnarray}
\norm{\hat{U}_{\ell}(\theta) - \hat{U}_{\ell}(\theta')}_\infty \le \sum_{a=1}^{qG} \norm{\hat{W}_a(\theta_a) - \hat{W}_a(\theta'_a)}_\infty \le L_0\sum_{a=1}^{qG}\norm{\theta_a-\theta'_a}_2.
\label{eq:app-telescope-bound}
\end{eqnarray}
Thus, the circuit-to-unitary map is Lipschitz with constant $O(G)$ in the product parameter metric as follows directly from the telescoping bound in Eq.~\eqref{eq:app-telescope-bound}.

Now discretize each real parameter with mesh width
\begin{eqnarray}
\Delta = \frac{\eta}{4L_0 qG\sqrt{s}}.
\label{eq:app-mesh-width}
\end{eqnarray}
If every component of every parameter vector is rounded to the nearest grid point, then $\norm{\theta_a - \theta'_a}_2 \le \sqrt{s}\Delta$ for each $a$, and Eq.~\eqref{eq:app-telescope-bound} implies
\begin{eqnarray}
\norm{\hat{U}_{\ell}(\theta) - \hat{U}_{\ell}(\theta')}_\infty \le L_0 qG\sqrt{s}\Delta = \frac{\eta}{4}.
\label{eq:app-unitary-net-radius}
\end{eqnarray}
The particular factor $4$ has no significance; it leaves room for harmless changes in constants when converting from unitaries to states.

For pure states, operator-norm closeness of preparation unitaries implies trace-distance closeness of output states. Let
\begin{eqnarray}
\ket{\psi}=\hat{U}\ket{0^n},
\quad
\ket{\phi}=\hat{V}\ket{0^n}.
\end{eqnarray}
Then,
\begin{eqnarray}
D\bigl(\ketbra{\psi}{\psi},\ketbra{\phi}{\phi}\bigr) = \sqrt{1-|\braket{\psi}{\phi}|^2} \le \norm{\ket{\psi}-e^{i\alpha}\ket{\phi}}_2 \le \norm{\hat{U} - e^{i\alpha}\hat{V}}_\infty
\label{eq:app-vector-to-trace}
\end{eqnarray}
for a suitable phase $e^{i\alpha}$. Dropping the phase can only change constants. Therefore, a unitary $O(\eta)$-net for a fixed layout gives an $O(\eta)$-net for the corresponding set of output pure states. Renormalizing $\eta$ by a universal constant gives a trace-distance $\eta$-net.

\subsection{Counting parameter grid points}

For one elementary parameter vector, compactness of the parameter domain and the mesh width in Eq.~\eqref{eq:app-mesh-width} imply at most $(C/\Delta)^s$ grid points. For all $qG$ elementary factors in a fixed layout, the number of grid points is bounded by
\begin{eqnarray}
\left(\frac{C}{\Delta}\right)^{sqG} \le \left(\frac{C'G}{\eta}\right)^{s'G},
\label{eq:app-fixed-layout-grid}
\end{eqnarray}
where $C'$ and $s'$ absorb $L_0, q, s$ and the constant diameter of the parameter domain. Taking logarithms yields
\begin{eqnarray}
\log{N_{\rm param}(\eta;\ell)} \le C_1G\log\!\left(\frac{C_2G}{\eta}\right).
\label{eq:app-param-entropy}
\end{eqnarray}
This is the continuous-parameter contribution. It is linear in $G$ because there are $O(G)$ parameter blocks, and it contains the factor $\log(G/\eta)$ because each block must be specified to finite precision $O(\eta/G)$.

\subsection{Counting layouts}

The remaining contribution is discrete. In an all-to-all architecture, the two-qubit support of each counted gate can be chosen in at most $\binom{n}{2}\le n^2/2$ ways. Additional discrete choices, such as selecting a gate type from a constant-size library or choosing a location in a fixed finite pattern, contribute only a constant factor per gate. Hence, the number of layouts satisfies
\begin{eqnarray}
N_{\rm layout}(n,G) \le (C_3n^2)^G,
\quad
\log{N_{\rm layout}(n,G)} \le C_4G\log(C_5n).
\label{eq:app-layout-count}
\end{eqnarray}
Multiplying the number of layouts by the number of grid points for each layout and using $0<\eta<1/2$ gives
\begin{eqnarray}
\log{\cN(\cS_{n,G},D,\eta)} &\le& C_1G\log{\left(\frac{C_2G}{\eta}\right)} + C_4G\log{(C_5n)} \nonumber \\
	&\le& C_{\rm ent} G \log{\left(\frac{C_{\rm arch}nG}{\eta}\right)},
\label{eq:app-entropy-full}
\end{eqnarray}
which is Eq.~\eqref{eq:metric-entropy-general}. If the layout is fixed in advance, or if the architecture has only a constant number of possible layouts at each depth, then Eq.~\eqref{eq:app-layout-count} is removed and one obtains the simplified form
\begin{eqnarray}
\log{\cN(\cS_{n,G},D,\eta)} \le C_{\rm ent}G \log\left(\frac{C_{\rm ent}G}{\eta}\right).
\label{eq:app-entropy-fixed-layout}
\end{eqnarray}
This proves Lemma~\ref{lem:metric-entropy}.

The proof also explains the operational meaning of the entropy bound. A circuit with real parameters is not an infinite-capacity statistical object at a fixed accuracy. At trace-distance resolution $\eta$, two parameter choices inside the same mesh cell are statistically interchangeable for the purpose of global state learning, because their output states differ by at most $O(\eta)$. The finite number of cells is precisely the Occam description length used in the sample-complexity bounds.

\section{From finite hypothesis selection to the agnostic Occam bound}\label{app:agnostic-proof}

This appendix expands the proof of Lemma~\ref{lem:finite-selection}, Theorem~\ref{thm:agnostic-occam}, and the lower-bound discussion. The main point is that a finite set of candidate quantum states can be searched with a statistical cost proportional to the logarithm of the number of candidates. After the continuous circuit class is replaced by a trace-distance net, the logarithm of the net size becomes the Occam penalty.

\subsection{Finite agnostic selection}

Let $\mathcal{H}=\{\hat{\sigma}_1,\ldots,\hat{\sigma}_K\}$ be a finite set and let $\hat{\rho}$ be arbitrary. Define
\begin{eqnarray}
a_i := D(\hat{\rho}, \hat{\sigma}_i),
\quad
a_* := \min_{1 \le i \le K} a_i.
\label{eq:app-ai}
\end{eqnarray}
For a pair $(i, j)$, let $\hat{P}_{ij}$ be the Helstrom effect, namely the projector onto the positive part of $\hat{\sigma}_i - \hat{\sigma}_j$~\cite{helstrom1976quantum}; more generally, one may take any effect attaining the trace norm. Then,
\begin{eqnarray}
\Tr{\hat{P}_{ij}(\hat{\sigma}_i-\hat{\sigma}_j)} = D(\hat{\sigma}_i, \hat{\sigma}_j).
\label{eq:app-helstrom-gap}
\end{eqnarray}
For any effect $\hat{E}$, the trace distance dominates the distinguishability induced by $\hat{E}$:
\begin{eqnarray}
\abs{\Tr\hat{E}(\hat{\rho} - \hat{\sigma})} \le D(\hat{\rho}, \hat{\sigma}).
\label{eq:app-effect-contraction}
\end{eqnarray}
Consequently, if candidate $i$ is close to $\hat{\rho}$ and candidate $j$ is much farther away, then the Helstrom comparison between $i$ and $j$ has a detectable bias. Indeed, set
\begin{eqnarray}
p_\rho = \Tr{\hat{P}_{ij}\hat{\rho}},
\quad
p_i = \Tr{\hat{P}_{ij}\hat{\sigma}_i},
\quad
p_j = \Tr{\hat{P}_{ij}\hat{\sigma}_j}.
\end{eqnarray}
Eq.~\eqref{eq:app-effect-contraction} gives $\abs{p_\rho - p_i} \le a_i$. On the other hand, using Eq.~\eqref{eq:app-helstrom-gap} and the triangle inequality,
\begin{eqnarray}
\abs{p_\rho - p_j} &\ge& p_i - p_j - \abs{p_\rho - p_i} \nonumber \\
	&=& D(\hat{\sigma}_i, \hat{\sigma}_j) - \abs{p_\rho - p_i} \nonumber \\
        &\ge& D(\hat{\rho}, \hat{\sigma}_j) - D(\hat{\rho}, \hat{\sigma}_i) - a_i \nonumber \\
        &=& a_j - 2a_i.
\label{eq:app-far-candidate-gap}
\end{eqnarray}
Therefore, if $a_j \ge 3a_i + 4\gamma$, then
\begin{eqnarray}
\abs{p_\rho - p_j} \ge a_i + 4\gamma \ge \abs{p_\rho - p_i} + 4\gamma.
\label{eq:app-comparison-margin}
\end{eqnarray}
An estimate of $p_\rho$ to additive accuracy $\gamma$ is enough to prefer $i$ over $j$ by a robust margin. This is the basic geometric reason why a tournament based on Helstrom comparisons can eliminate candidates that are much farther from $\hat{\rho}$ than the best candidate.

If a single binary comparison is repeated on $m$ independent copies, Hoeffding's inequality gives~\cite{hoeffding1963probability}
\begin{eqnarray}
        \Pr\left[\abs{\widetilde{p}_\rho - p_\rho} > \gamma\right] \le 2\exp(-2m\gamma^2).
\label{eq:app-hoeffding}
\end{eqnarray}
A direct implementation that estimates all $K^2$ pairwise probabilities would be wasteful. The finite-state quantum selection procedures used in Refs.~\cite{haah2016sample,badescu2021quantum} implement the same tournament logic through optimized collective measurements and threshold searches, so that the required number of copies scales as
\begin{eqnarray}
M \ge \frac{C}{\gamma^2} \left( \log K + \log\frac{1}{\delta} \right),
\label{eq:app-finite-selection-samples}
\end{eqnarray}
not as a polynomial in $K$. On the event that all comparisons required by the selection primitive are reliable, every candidate with distance larger than a fixed constant times $a_* + \gamma$ loses to a near-best candidate. The output $\hat{\sigma}$ therefore satisfies
\begin{eqnarray}
D(\hat{\rho}, \hat{\sigma}) \le c a_* + C'\gamma = c\inf_{\hat{\sigma} \in \mathcal{H}} D(\hat{\rho}, \hat{\sigma}) + C'\gamma.
\label{eq:app-finite-selection-bound}
\end{eqnarray}
Renaming $C'\gamma$ as $\eps$ gives Lemma~\ref{lem:finite-selection}. The constants are not important for the main results; only the dependence on $\log K$, $\log(1/\delta)$, and $1/\eps^2$ is used.

It is sometimes useful to state a weighted version. If the candidates carry weights $w_i > 0$ with $\sum_i w_i \le 1$, the same argument with a weighted union bound gives confidence radius
\begin{eqnarray}
r_i(M, \delta) := C\sqrt{\frac{\log(1/w_i) + \log(1/\delta)}{M}}
\label{eq:app-weighted-radius}
\end{eqnarray}
up to universal constants, and an output satisfying
\begin{eqnarray}
D(\hat{\rho}, \hat{\sigma}) \le C_0\inf_i \left[ D(\hat{\rho}, \hat{\sigma}_i) + r_i(M, \delta) \right].
\label{eq:app-weighted-selector}
\end{eqnarray}
This weighted form is the most convenient way to prove the adaptive theorem in Appendix~\ref{app:adaptive-proof}.

\subsection{Reduction from a circuit class to a finite class}

We now pass from a continuous circuit class to a finite net. Fix $G$ and an additive accuracy parameter $\eps$. Let $\mathcal{H}$ be an $(\eps/8)$-net of $\cS_{n,G}$ in trace distance with the net points chosen inside $\cS_{n,G}$. Lemma~\ref{lem:metric-entropy} gives
\begin{eqnarray}
\log\abs{\mathcal{H}} \le C_{\rm ent}G \log{\left(\frac{C_{\rm arch}nG}{\eps}\right)}.
\label{eq:app-net-size-detailed}
\end{eqnarray}
Let $\hat{\sigma}_G^\star \in \cS_{n,G}$ be an almost optimal approximant, so that for arbitrary $\xi>0$,
\begin{eqnarray}
D(\hat{\rho}, \hat{\sigma}_G^\star) \le d_G(\hat{\rho}) + \xi.
\label{eq:app-sigma-star}
\end{eqnarray}
By the net property, there exists $\hat{\eta}_G \in \mathcal{H}$ with
\begin{eqnarray}
D(\hat{\sigma}_G^\star, \hat{\eta}_G) \le \eps/8.
\label{eq:app-net-point}
\end{eqnarray}
The triangle inequality gives
\begin{eqnarray}
D(\hat{\rho}, \hat{\eta}_G) \le D(\hat{\rho}, \hat{\sigma}_G^\star) + D(\hat{\sigma}_G^\star, \hat{\eta}_G) \le d_G(\hat{\rho}) + \xi + \eps/8.
\label{eq:app-best-net-error}
\end{eqnarray}
Thus, the best finite-net error is no larger than the best circuit-class error plus the net radius.

Applying Lemma~\ref{lem:finite-selection} to $\mathcal{H}$ with additive accuracy proportional to $\eps$ gives, with probability at least $1-\delta$,
\begin{eqnarray}
D(\hat{\rho}, \hat{\sigma}_G) &\le& c\inf_{\hat{\eta} \in \mathcal{H}} D(\hat{\rho}, \hat{\eta}) + C'\eps \nonumber \\
	&\le& c d_G(\hat{\rho}) + c\xi + C''\eps.
\label{eq:app-agnostic-bound-detailed}
\end{eqnarray}
Since $\xi>0$ was arbitrary, it can be removed. Absorbing constants into $C_0$ and the additive accuracy gives Eq.~\eqref{eq:agnostic-bound}. The copy condition is exactly Eq.~\eqref{eq:agnostic-M}, because Eq.~\eqref{eq:app-finite-selection-samples} uses $\log{\abs{\mathcal{H}}}$ and Eq.~\eqref{eq:app-net-size-detailed} bounds this logarithm by the circuit entropy.

For the learning-curve form, set $r=r_G(M,\delta)$ to be the smallest positive number satisfying
\begin{eqnarray}
M r^2 \ge C \left[ \log{\cN(\cS_{n,G},D,r/16)} + \log{\frac{1}{\delta}} \right].
\label{eq:app-fixed-radius}
\end{eqnarray}
Repeating the preceding argument with an $r/16$-net gives
\begin{eqnarray}
D(\hat{\rho}, \hat{\sigma}_G) \le C_0\left[ d_G(\hat{\rho}) + r_G(M,\delta) \right].
\label{eq:app-fixed-radius-bound}
\end{eqnarray}
Substituting Lemma~\ref{lem:metric-entropy} into Eq.~\eqref{eq:app-fixed-radius} yields
\begin{eqnarray}
r_G(M, \delta) = \widetilde{O}\left( \sqrt{\frac{G + \log(1/\delta)}{M}} \right),
\label{eq:app-fixed-radius-tilde}
\end{eqnarray}
which is Eq.~\eqref{eq:agnostic-learning-curve}.

\subsection{Realizable lower bound and consistency with bounded-complexity learning}

The lower bound used in Corollary~\ref{cor:agnostic-lower} follows because the agnostic problem contains the realizable one. If $\hat{\rho} \in \cS_{n,G}$, then $d_G(\hat{\rho})=0$. Any agnostic learner that guarantees trace-distance error at most $\eps$ must therefore learn the realizable class $\cS_{n,G}$ to the same accuracy.

The standard packing argument has three steps. First, construct a finite subset $\mathcal{P} \subset \cS_{n,G}$, such that
\begin{eqnarray}
D(\hat{\sigma}, \hat{\sigma}') \ge 2\eps ~~\text{for all distinct}~~ \hat{\sigma}, \hat{\sigma}' \in \mathcal{P},
\quad
\log{\abs{\mathcal{P}}} \ge c\min\{2^n,G\}.
\label{eq:app-packing}
\end{eqnarray}
In the circuit-limited regime $G \ll 2^n$, this can be understood by taking $k$ qubits with $2^k \asymp G$, preparing a large packing of $k$-qubit pure states, and tensoring the remaining $n-k$ qubits with $\ket{0}$. Generic $k$-qubit pure states can be synthesized to fixed accuracy using $O(2^k)$ two-qubit gates, so the packing lies inside a $G$-gate class after adjusting constants. When $G$ exceeds the pure-state dimension scale, the packing size saturates at order $2^n$.

Second, suppose a learner outputs $\hat{\rho}$ with $D(\hat{\rho}, \hat{\sigma}_x) \le \eps/2$ for every packing element $\hat{\sigma}_x$. Because the packing points are separated by $2\eps$, a minimum-distance decoder can identify the index $x$ from $\hat{\rho}$ with constant success probability. Thus, learning the state is at least as hard as transmitting the packing index through the quantum experiment.

Third, Fano's inequality converts successful index recovery into an information requirement. If $X$ is uniform on the packing and the measurement transcript is $Y$, then reliable decoding requires
\begin{eqnarray}
I(X;Y) \ge c'\log{\abs{\mathcal{P}}}.
\label{eq:app-fano-requirement}
\end{eqnarray}
Holevo's bound limits $I(X;Y)$ by the Holevo information of the ensemble of $M$ copies~\cite{holevo1973bounds}. For a local packing at trace-distance scale $\eps$, the Holevo information per copy is $O(\eps^2)$ in the standard construction, so
\begin{eqnarray}
I(X;Y) \le C'M\eps^2.
\label{eq:app-holevo-bound}
\end{eqnarray}
Combining Eqs.~\eqref{eq:app-packing}--\eqref{eq:app-holevo-bound} gives
\begin{eqnarray}
M \ge c''\frac{\min\{2^n,G\}}{\eps^2}.
\label{eq:app-realizable-lower}
\end{eqnarray}
This is the lower-bound scale in Theorem~\ref{thm:realizable-benchmark} and Corollary~\ref{cor:agnostic-lower}.

The same linear-in-$G$ realizable scale was also obtained in a related bounded-complexity learning setting in Ref.~\cite{zhao2024learning}. In the present study, the role of that reference is to confirm that the scale derived above agrees with the existing bounded-complexity learning picture. The logical flow here does not require taking Ref.~\cite{zhao2024learning} as an assumption: the Occam argument first derives the metric-entropy upper bound and the packing lower bound, and the external result is then cited as a consistency check on the realizable benchmark.

\section{Adaptive model selection}\label{app:adaptive-proof}

This appendix gives a more explicit proof of Theorem~\ref{thm:adaptive-occam}. The fixed-$G$ theorem says that one can compete with the best state in a single circuit class. The adaptive theorem tells us that one can compete with the best level in a hierarchy without knowing that level in advance. The only additional ingredient is a weighted union bound, or equivalently, an Occam code length for the model level.

\subsection{Weighted union of finite nets}

For each level $j$, let $\mathcal{H}_j$ be an $r_j/16$-net of $\cS_{n,G_j}$ in trace distance. Let
\begin{eqnarray}
K_j := \abs{\mathcal{H}_j},
\quad
L_j := \log{K_j}.
\end{eqnarray}
Assign level weight $\pi_j>0$ with $\sum_{j=0}^J \pi_j \le 1$. A candidate $\hat{\eta} \in \mathcal{H}_j$ receives the weight
\begin{eqnarray}
w(\hat{\eta}) := \frac{\pi_j}{K_j}.
\label{eq:app-candidate-weight}
\end{eqnarray}
If the same state appears in more than one net, we assign it to the smallest level containing it; this can only reduce the total weight. Then,
\begin{eqnarray}
\sum_{\hat{\eta} \in \mathcal{H}_{\rm all}} w(\hat{\eta}) \le \sum_{j=0}^J\pi_j \le 1.
\label{eq:app-weight-sum}
\end{eqnarray}
The weighted finite-selection bound, Eq.~\eqref{eq:app-weighted-selector}, gives a learner such that, with probability at least $1-\delta$,
\begin{eqnarray}
D(\hat{\rho}, \hat{\sigma}) \le C_0 \inf_{\hat{\eta} \in \mathcal{H}_{\rm all}} \left[ D(\hat{\rho}, \hat{\eta}) + \sqrt{\frac{\log(1/w(\hat{\eta})) + \log(1/\delta)}{M}} \right].
\label{eq:app-weighted-union-bound}
\end{eqnarray}
For $\hat{\eta} \in \mathcal{H}_j$,
\begin{eqnarray}
\log\frac{1}{w(\hat{\eta})} = L_j + \log\frac{1}{\pi_j}.
\label{eq:app-weight-log}
\end{eqnarray}
Therefore, the penalty at level $j$ is governed by $L_j + \log(1/\pi_j) + \log(1/\delta)$.

The implicit definition of $r_j(M, \delta)$ in Eq.~\eqref{eq:rj-implicit} is exactly the fixed-point version of this penalty:
\begin{eqnarray}
M r_j^2 \ge C\left[ \log\cN(\cS_{n,G_j},D,r_j/16) + \log\frac{1}{\pi_j\delta} \right].
\label{eq:app-rj-fixed-point}
\end{eqnarray}
Since $\mathcal{H}_j$ is an $r_j/16$-net, $L_j \le \log\cN(\cS_{n,G_j},D,r_j/16)$, and the square-root penalty in Eq.~\eqref{eq:app-weighted-union-bound} is at most a universal constant times $r_j$.

\subsection{Comparison with the best level}

Fix any level $j$. Let $\hat{\sigma}_j^\star \in \cS_{n,G_j}$ satisfy
\begin{eqnarray}
D(\hat{\rho}, \hat{\sigma}_j^\star) \le d_{G_j}(\hat{\rho}) + \xi
\label{eq:app-adaptive-star}
\end{eqnarray}
for arbitrary $\xi>0$. Because $\mathcal{H}_j$ is an $r_j/16$-net, there exists $\hat{\eta}_j \in \mathcal{H}_j$, such that
\begin{eqnarray}
D(\hat{\sigma}_j^\star, \hat{\eta}_j) \le r_j/16.
\label{eq:app-adaptive-net-point}
\end{eqnarray}
The triangle inequality gives
\begin{eqnarray}
D(\hat{\rho}, \hat{\eta}_j) \le d_{G_j}(\hat{\rho}) + \xi + r_j/16.
\label{eq:app-adaptive-net-error}
\end{eqnarray}
Substituting this particular $\hat{\eta}_j$ into Eq.~\eqref{eq:app-weighted-union-bound} yields
\begin{eqnarray}
D(\hat{\rho}, \hat{\sigma}) \le C_1\left[d_{G_j}(\hat{\rho}) + \xi + r_j(M,\delta) \right].
\label{eq:app-adaptive-level-bound}
\end{eqnarray}
Since this holds for every level $j$ and every $\xi>0$, taking the infimum over $j$ and sending $\xi$ down to $0$ gives
\begin{eqnarray}
D(\hat{\rho}, \hat{\sigma}) \le C_1 \inf_{0 \le j \le J} \left[d_{G_j}(\hat{\rho}) + r_j(M,\delta) \right],
\label{eq:app-adaptive-oracle-detailed}
\end{eqnarray}
which is Eq.~\eqref{eq:adaptive-oracle}.

This proof also clarifies why the adaptive procedure does not overfit merely by taking the union of many nets. A candidate from a large class has a smaller code weight $w(\hat{\eta})$, hence a larger value of $\log(1/w(\hat{\eta}))$. It must therefore improve the fit to $\hat{\rho}$ by more than its additional radius before it can beat a simpler candidate. This is the mathematical content of the Occam penalty.

\subsection{Evaluating the radius}

Using Lemma~\ref{lem:metric-entropy} in Eq.~\eqref{eq:app-rj-fixed-point} gives
\begin{eqnarray}
M r_j^2 \ge C\left[ C_{\rm ent}G_j \log{\left(\frac{16C_{\rm arch}nG_j}{r_j}\right)} + \log{\frac{1}{\pi_j\delta}} \right].
\label{eq:app-rj-entropy}
\end{eqnarray}
Solving this fixed point up to logarithmic factors gives
\begin{eqnarray}
r_j(M,\delta) = \widetilde{O}\left( \sqrt{\frac{G_j + \log(1/\pi_j\delta)}{M}} \right),
\label{eq:app-rj-tilde-detailed}
\end{eqnarray}
which is Eq.~\eqref{eq:rj-tilde}. For the common choice $\pi_j=c/(j+1)^2$, the additional term is
\begin{eqnarray}
\log\frac{1}{\pi_j\delta} = O(\log(j+1) + \log(1/\delta)).
\label{eq:app-prior-cost}
\end{eqnarray}
This is logarithmic in the model level and is negligible compared with the leading $G_j$ term in the regimes where the circuit entropy is large.

Finally, a dyadic hierarchy converts the infimum over $j$ into an infimum over all $G \le G_{\max}$ up to constants. For any integer $G$ with $G_0 \le G \le G_{\max}$, choose $j$ such that
\begin{eqnarray}
G \le G_j \le 2G.
\label{eq:app-dyadic-choice}
\end{eqnarray}
Because the classes are nested, $d_{G_j}(\hat{\rho}) \le d_G(\hat{\rho})$. The radius at level $G_j$ is at most a constant-factor version of the radius at $G$, up to logarithmic factors. Hence Eq.~\eqref{eq:app-adaptive-oracle-detailed} implies
\begin{eqnarray}
D(\hat{\rho}, \hat{\sigma}) \le C_0 \inf_{G \le G_{\max}} \left[ d_G(\hat{\rho}) + \widetilde{O}\left( \sqrt{\frac{G+\log(1/\delta)}{M}} \right) \right],
\label{eq:app-adaptive-continuous-G}
\end{eqnarray}
which is Eq.~\eqref{eq:adaptive-oracle-tilde}.

\subsection{Interpretation of the selected level}

The proof does not require the learner to estimate $d_G(\hat{\rho})$ explicitly. The approximation curve appears only in the oracle comparison. Operationally, the learner compares candidates from all levels using a common penalized scale. If a deeper class reduces the distance to $\hat{\rho}$ by less than its increase in $r_j$, it is not selected by the oracle bound. If it reduces the distance by more than the penalty, the adaptive guarantee allows the learner to move to that deeper class. Thus, the theorem implements the rule
\begin{eqnarray}
\text{\em ``use additional expressibility only when it is supported by samples.''}
\end{eqnarray}
This is the formal adaptive version of the sample-supported expressibility principle used in the main text.

\section{Numerical illustrations}\label{app:numerics}

This appendix explains the purpose, methodology, and interpretation of the numerical illustrations. The plots are not used as assumptions in any theorem. They are deterministic visualizations of the two scales that appear in the theory: the approximation-estimation objective
\begin{eqnarray}
\Phi_M(G;\hat{\rho}) = d_G(\hat{\rho}) + \text{statistical radius} \simeq d_G(\hat{\rho})+\sqrt{G/M},
\label{eq:app-numeric-occam-objective}
\end{eqnarray}
and the sample-supported boundary
\begin{eqnarray}
G_{\rm supported}(M,\eps;n) = \min\{2^n,M\eps^2\}.
\label{eq:app-numeric-supported-boundary}
\end{eqnarray}
All logarithmic factors and architecture-dependent constants are suppressed in the plots.

The first numerical goal is to show why the adaptive oracle inequality has a finite optimal circuit scale. The approximation error $d_G(\hat{\rho})$ is nonincreasing in $G$, while the statistical radius grows like $\sqrt{G/M}$ for fixed $M$. Their sum can therefore have a minimum at an intermediate $G$. This minimum is the statistically justified ansatz size for the given sample budget, not necessarily the largest circuit size that can be implemented on hardware.

The second numerical goal is to visualize the law $G_{\rm supported} \simeq M\eps^2$. This boundary is obtained by requiring the statistical radius to be no larger than the target accuracy:
\begin{eqnarray}
\sqrt{G/M} \lesssim \eps {\quad\Longleftrightarrow\quad} G \lesssim M\eps^2.
\label{eq:app-supported-derivation}
\end{eqnarray}
For very large $G$, the circuit class cannot exceed the effective dimension of the pure-state manifold, so the boundary saturates at $2^n$.

Fig.~\ref{fig:app-tradeoff} uses stylized approximation curves rather than data from a specific quantum device. This is intentional: the purpose is to illustrate how different possible approximation profiles interact with the same Occam penalty. The script chooses a logarithmic grid of gate budgets $G$ and evaluates two monotone decreasing proxy curves,
\begin{eqnarray}
d_G^{\rm exp} &=& d_{\rm floor}^{\rm exp} + A_{\rm exp}\exp[-(G/G_{\rm exp})^{\alpha_{\rm exp}}], \nonumber \\[2pt]
d_G^{\rm poly} &=& d_{\rm floor}^{\rm poly} + A_{\rm poly}\left(1+G/G_{\rm poly}\right)^{-\beta}.
\label{eq:app-poly-proxy}
\end{eqnarray}
The constants are chosen only to make both regimes visible on the same axes. The exponential proxy represents a source for which additional circuit depth quickly captures the relevant structure until a floor is reached. The polynomial proxy represents a source for which approximation improves more gradually with depth.

For each proxy curve, the plotted Occam objective is
\begin{eqnarray}
\Phi_M^{\nu}(G) = d_G^{\nu} + \lambda\sqrt{G/M},
\quad
\nu \in \{\mathrm{exp},\mathrm{poly}\},
\label{eq:app-plotted-objective}
\end{eqnarray}
with $\lambda=1$ in the displayed figure. Changing $\lambda$ would shift the precise minimizer but not the qualitative conclusion. Here, the sample budget is $M=5.0 \times 10^4$, and the vertical dotted line is the supported scale for $\eps=0.20$: hence, 
\begin{eqnarray}
M\eps^2 = 2.0 \times 10^3.
\label{eq:app-Meps2-numeric}
\end{eqnarray}

\begin{figure}[t]
\includegraphics[width=0.65\columnwidth]{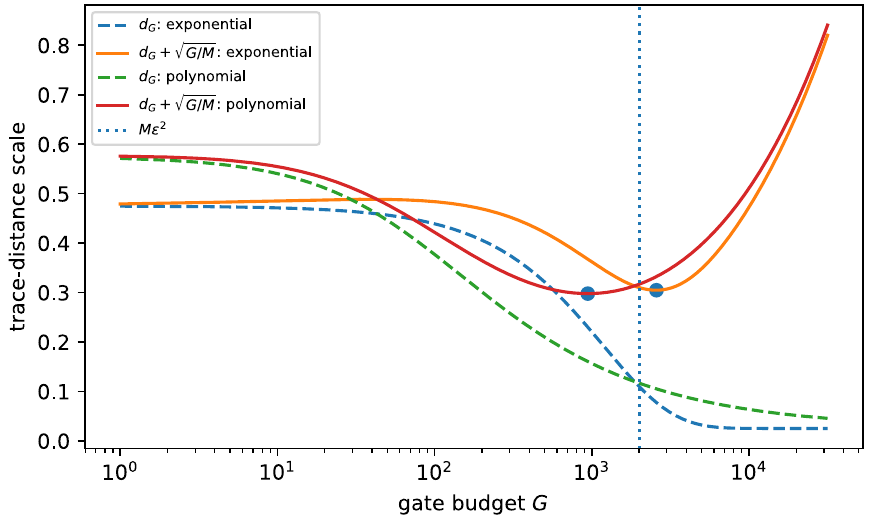}
\caption{Illustrative approximation-estimation tradeoff for $M=5.0\times10^4$. The dashed curves are stylized approximation errors $d_G$, the solid curves add the statistical radius $\sqrt{G/M}$, and the dotted vertical line marks $M\eps^2=2.0\times10^3$ for $\eps=0.20$.}
\label{fig:app-tradeoff}
\end{figure}

The behavior in Fig.~\ref{fig:app-tradeoff} should be read as follows. At small $G$, both solid curves are dominated by approximation error. Increasing $G$ is beneficial because the ansatz is under-expressive. At intermediate $G$, the approximation error and statistical radius are comparable. This is the useful model-selection region: the data can justify increasing the ansatz only if the decrease in $d_G$ is larger than the increase in $\sqrt{G/M}$. At large $G$, the dashed approximation curves continue to decrease or flatten, but the solid Occam objectives rise because the statistical penalty grows. The additional expressibility is then not supported by the available copies for a global trace-distance guarantee.

The exponential and polynomial examples differ in where their minima occur. The exponential proxy drops quickly, so the best objective is obtained soon after the approximation error reaches its floor. The polynomial proxy improves more slowly, so it can justify a larger circuit budget before the statistical penalty dominates. This illustrates the role of the unknown approximation curve $G \mapsto d_G(\hat{\rho})$ in Theorem~\ref{thm:adaptive-occam}. The theorem does not assume either of these shapes; it states that the adaptive learner competes with whichever tradeoff is best for the actual source.

Fig.~\ref{fig:app-boundary} evaluates Eq.~\eqref{eq:app-numeric-supported-boundary} directly. Here, the number of qubits is fixed at
\begin{eqnarray}
n=20, \quad 2^n=1{,}048{,}576,
\label{eq:app-n20}
\end{eqnarray}
and the sample budget $M$ is varied over a logarithmic grid. The target accuracies shown are
\begin{eqnarray}
\eps \in \{0.05,0.10,0.20,0.30\}.
\label{eq:app-epsilon-grid}
\end{eqnarray}
For each pair $(M, \eps)$, the script computes $M\eps^2$ and then takes the minimum with $2^n$.

\begin{figure}[t]
\includegraphics[width=0.65\columnwidth]{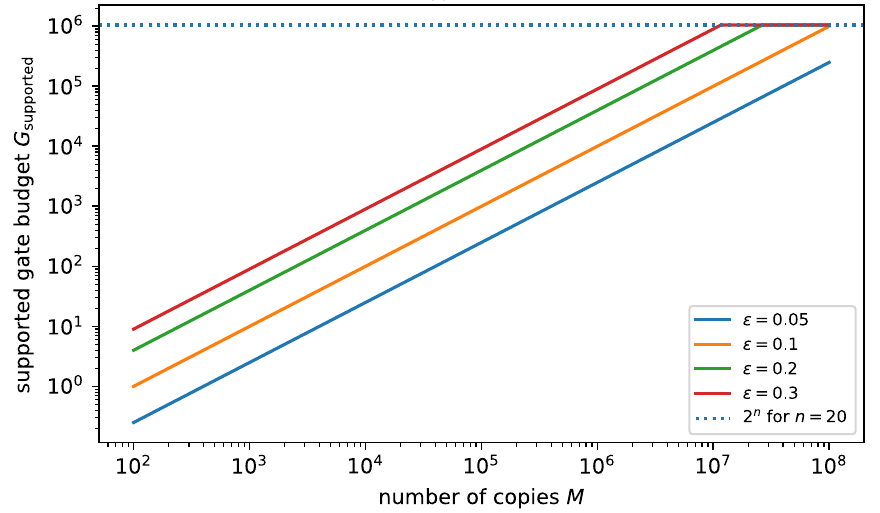}
\caption{Sample-supported expressibility boundary for $n=20$. The curves show $G_{\rm supported}=\min\{2^n,M\eps^2\}$ for several target accuracies, ignoring logarithmic factors. The horizontal dotted line is the tomography saturation $2^{20}$.}
\label{fig:app-boundary}
\end{figure}

The log-log plot has two simple regimes. Below the horizontal plateau, each curve has slope one because
\begin{eqnarray}
\log G_{\rm supported} = \log{M} + 2\log{\eps}.
\label{eq:app-boundary-slope}
\end{eqnarray}
Thus, increasing the number of copies by a factor of ten increases the statistically supported gate budget by a factor of ten at fixed accuracy. Increasing the target accuracy from $\eps=0.10$ to $\eps=0.20$ shifts the curve upward by a factor of four, because the boundary depends on $\eps^2$.

The plateau begins when $M\eps^2$ reaches $2^n$. The saturation sample size is
\begin{eqnarray}
M_{\rm sat}(n,\eps) = \frac{2^n}{\eps^2}.
\label{eq:app-saturation-samples}
\end{eqnarray}
For $n=20$, this gives approximately
\begin{eqnarray}
\begin{array}{c|c}
\eps & M_{\rm sat}(20,\eps) \\
\hline
0.05 & 4.19\times 10^8 \\
0.10 & 1.05\times 10^8 \\
0.20 & 2.62\times 10^7 \\
0.30 & 1.17\times 10^7
\end{array}
\label{eq:app-saturation-table}
\end{eqnarray}
This explains the figure: the larger-accuracy curves reach the $2^{20}$ plateau earlier, while the smaller-accuracy curves continue to grow linearly over most of the displayed range.

Fig.~\ref{fig:app-tradeoff} tells us that increasing expressibility has two effects with opposite signs. It can reduce approximation error, but it also increases the number of statistically distinguishable hypotheses. The lesson of Fig.~\ref{fig:app-boundary} is that the supported gate budget is linear in $M\eps^2$ until the tomography saturation is reached. These conclusions are precisely the qualitative content of Theorem~\ref{thm:adaptive-occam} and Corollary~\ref{cor:sample-supported}~\footnote{Logarithmic factors are deliberately suppressed in both plots, i.e.,  $G \longrightarrow G\log\left(\frac{C_{\rm arch}nG}{\eps}\right)$. It would slightly lower the supported gate budget for fixed $M$ and $\eps$. However, the leading message is unchanged.}.

\bibliographystyle{apsrev4-2}
\bibliography{manuscript_QOL}

\end{document}